\date{\vspace{-1.0cm}\small\today}
\newcommand{\e}{\mathrm{e}}
\newcommand{\imag}{\mathrm{i}}
\renewcommand{\d}{\text{d}}
\newcommand{\dd}{\text{\emph{d}}}
\newcommand{\half}{\mbox{$\frac{1}{2}$}}
\newcommand{\phalf}{\mbox{$ \frac{\pi}{2}$}}
\newcommand{\quarter}{\mbox{$\frac{1}{4}$}}
\newcommand{\ie}{\emph{i.e.}  }
\newcommand{\eg}{\emph{e.g.}  }
\newcommand{\N}{\mathbb{N}}
\newcommand{\Z}{\mathbb{Z}}
\newcommand{\R}{\mathbb{R}}
\newcommand{\C}{\mathbb{C}}
\newtheorem{theorem}{Theorem}[section]
\newtheorem{lemma}[theorem]{Lemma}
\theoremstyle{definition}
\theoremstyle{remark}
\newtheorem{remark}[theorem]{Remark}
\newcommand{\ee}{{\'e} }
\newcommand{\A}{\kappa}
\renewcommand{\L}{}
\newcommand{\wone}{\omega_{1}}
\newcommand{\pwone}{\mbox{$\frac{\pi_{\phantom{1}}}{\wone}$}}
\newcommand{\ptwone}{\mbox{$\frac{\pi}{2 \wone}$}}
\newcommand{\wthr}{\omega_{3}}
\newcommand{\gn}{g_{0}}
\newcommand{\pn}{p_{0}}
\newcommand{\numb}{N}
\newcommand{\kk}{\hat{k}}
\begin{document}    
\title{ \bf Integral representation of solution to the non-stationary Lam{\'e} equation}
\author{Farrokh Atai}
\date{\today}
\email[Electronic address: ]{farrokh@kth.se}
\affiliation{Department of Theoretical Physics, KTH Royal Institute of Technology,  SE-106 91 Stockholm, Sweden}
\begin{abstract}
We consider methods for constructing explicit solutions of the non-stationary Lam\ee equation, which is a generalization of the classical Lam\ee equation, that has appeared in works on integrable models, conformal field theory, high energy physics and representation theory.
We also present a general method for constructing integral representations of solutions to the non-stationary Lam\ee equation by a recursive scheme. Explicit integral representations, for special values of the model parameters, are also presented.
Our approach is based on kernel function methods which can be naturally generalized to the non-stationary Heun equation. 


{\bf Keywords:} non-stationary Lam\ee equation; kernel functions; solutions method; iterative integral representations; 
\end{abstract}

\maketitle
\section{Introduction}

This paper considers a complementary approach to our recent paper \cite{AtaiLangmann2016_2}, which was devoted to the study of a certain non-stationary Schr{\"o}dinger equation with elliptic potentials (see Eq. \eqref{eq_non_stationary_heun_defining}) that have appeared in ongoing research in mathematics and physics; see \eg Refs.
\cite{etingof1994,etingof1995,Felder1995,FELDER1997,FelderVarchenko,EtingofKirillov:1,BazhanovMangazeev:1,BazhanovMangazeev:2,Fateev2009,Rosengren2013,Rosengren2014,Rosengren2015,Kolb2015} (some of which are discussed below). This so-called \emph{non-stationary Heun equation} \cite{Langmann_Takemura,Rosengren2013,Kolb2015}, also known as \emph{quantum Painlev\ee VI} \cite{Nagoya2011}, is a generalization of the Heun equation which is a second order Fuchsian differential equation with four regular singular points. 
The simplest non-trivial special case is the non-stationary Lam\ee equation, also known as the \emph{KZB heat equation} \cite{FelderVarchenko}, which can be written as
\begin{equation}
\Bigl( \frac{2\pi \imag }{\wone^{2}} \A \frac{\partial}{\partial \tau} -\frac{\partial^{2}}{\partial x^{2}} + g \left( g - 1 \right) \wp( x |\half  \wone , \half \wone \tau) \Bigr)  \psi(x,\tau)  = E(\tau) \psi(x,\tau) 
\label{eq_non_stationary_lame_defining}
\end{equation}
with model parameters $(\A,g)$ and $\wp( x |\half  \wone , \half \wone \tau)$ the Weierstrass elliptic function with periods $(\wone , \wone \tau)$ (the definitions of well-known special functions, needed in this paper, are collected in Appendix \ref{appendix_special_functions} for the convenience of the reader). The solutions of the non-stationary Lam\ee equation has also appeared in many different contexts, such as partition functions of the eight-vertex model \cite{BazhanovMangazeev:1,BazhanovMangazeev:2}, exact four-point correlation functions of the quantum Liouville model \cite{Fateev2009}, high energy physics \cite{Felder1995,Falceto1997,FelderVarchenko}, and representation theory of affine Lie algebras \cite{etingof1994,etingof1995,EtingofKirillov:1}. We therefore believe that the results in this paper  contribute to these subjects. 

\subsection{Sketch of solutions}\label{section_sketch_of_solutions}

The aim of this paper is to develop a recursive scheme for constructing functions\footnote{We will suppress the dependence on the parameter $\tau$ in the rest of this paper in order to simplify notation.} \newline $\psi_{n,g}(x)~\equiv~\psi_{n}( x , \tau ; g , \A)$, labeled by integers $n$ and depending on two (complex) variables $x$ and $\tau$, with $\Im(\tau)>0$, which satisfies the non-stationary Lam\ee equation \eqref{eq_non_stationary_lame_defining} for some $E_{n,g} \equiv E_{n}(\tau ; g , \A)$. The solutions constructed in this Paper are of the form
\begin{equation}
\psi_{n,g}(x) = ((2q^{\quarter})^{-2} \vartheta_{1}(\pwone x)^{2})^{\half g} \mathcal{P}_{n,g}(\cos(\pwone x))\quad ( x\in [-\wone , \wone ], n\in\Z , g > -\half)
\end{equation}
with $\vartheta_{1}$ the odd Jacobi theta function \cite{WhitWat}, $\mathcal{P}_{n,g}$ an analytic function, and
\begin{equation}
E_{n,g} = \left( \frac{\pi}{\wone}\right)^{2}( n + g)^{2} - 4g(g-1)\frac{\eta_{1}}{\wone}+ 6g^{2}\bigl( \frac{\eta_{1}}{\wone}-\frac{\pi^{2}}{12\wone^{2}}\bigr),
\label{eq_energy_eigenvalues}
\end{equation}
with $\eta_{1}\equiv \eta_{1}(\tau)$ in \eqref{eq_eta_constant}. A precise characterization of the solutions can be found in \eqref{eq_ns_lame_eigenfunctions}-\eqref{eq_ns_lame_polynomial_expansion}. 

It is important to stress that our solutions depend in an essential way on the parameter $\A$ (as discussed below), and the iterative integral representations are well-defined for real $\A$. (See Section \ref{section_conclusions} for a discussion regarding complex $\A$.) 
(Note that $E$ in \eqref{eq_non_stationary_lame_defining} is of lesser importance when $\A\neq 0$ and $\tau$ finite ($\Im(\tau)>0$), as it can be transformed to $0$ by changing ''normalization'' of $\psi_{n,g}(x)$ by an analytic function of $\tau$. We find it convenient to include $E$ for reasons that is made clear below.)
In particular, we focus on a recursive scheme that allows the solutions of \eqref{eq_non_stationary_lame_defining} to be written as explicit multidimensional integrals. More specifically, we construct explicit integral transforms $\mathcal{K}_{g+\A,g}$  
that maps solutions of the non-stationary Lam\ee equation \eqref{eq_non_stationary_lame_defining} for parameters $(\A ,g )$ to solutions with parameters $(\A, g + \A)$: Let $\psi_{n,g}(x)$ be a solution of \eqref{eq_non_stationary_lame_defining} as described above, then 
\begin{equation}
\psi_{n+s,g+\A}(x ) = (\mathcal{K}_{g+\A,g}\psi_{n,g})(x) \quad (n,s \in \Z)
\label{eq_integral_transform_general}
\end{equation}
with $s$ an integer shift determined by the integral transform operator,\footnote{In Section \ref{section_integer couplings} the shift is proportional to $\A$ and in Section \ref{section_arbitrary_couplings} it is $0$.} and an explicitly known $E$, as is shown in Sections \ref{section_integer couplings} and \ref{section_arbitrary_couplings}. 
(We note that Eq. \eqref{eq_integral_transform_general} suggests that there are unique solutions of the non-stationary Lam\ee equation and we discuss the uniqueness of our solutions in Remark \ref{remark_uniqueness}.)
Using the integral transform in \eqref{eq_integral_transform_general} allows us to construct solutions of the non-stationary Lam\ee equation \eqref{eq_non_stationary_lame_defining} for parameters  $(\A, g ) = (\A,  \numb \A + \gn)$, where $\numb\in \N$ and $0 \leq \gn < \A$ if $\A>0$, or $\gn> - \numb \A$ if $\A < 0$, by repeated application of the integral transform in order to obtain solutions: The iterative integral representations of the solutions to the non-stationary Lam\ee equation are then given as
\begin{equation}
\psi_{n, \numb \A + \gn}(x) = ( \mathcal{K}_{\numb\A + \gn , (\numb-1)\A +\gn}  \mathcal{K}_{(\numb-1)\A + \gn , (\numb-2)\A +\gn} \cdots \mathcal{K}_{\A + \gn , \gn} \psi_{ \tilde{n}, \gn})(x),
\label{eq_recursive_integral_solutions}
\end{equation}
where $\tilde{n}$ is an integer and $\psi_{\tilde{n},\gn}$ a solutions of \eqref{eq_non_stationary_lame_defining} for parameters $(\A, g) = (\A , \gn)$. (The sketched arguments above are made precise in Sections \ref{section_integer couplings} and \ref{section_arbitrary_couplings}.) We often refer to the function $\psi_{n,\gn}$ used in \eqref{eq_recursive_integral_solutions} as the \emph{seed function.}

The seed functions can be constructed, for example, by the series solutions in \cite{AtaiLangmann2016_2} or by the integral representations of the irregular\footnote{As explained in Section \ref{section_conclusions}.} solutions in \eg \cite{etingof1994}. It is, in general, not a simple task to construct a simple, explicit seed function for the non-stationary Lam\ee equation.  However, in the special cases where $\gn = \pn \in \{0,1\}$ then \eqref{eq_non_stationary_lame_defining} reduces to an inhomogeneous variant of the heat equation, given by
\begin{equation}
\Bigl( \frac{2\pi \imag}{\wone^{2}} \A \frac{\partial }{\partial \tau} - \frac{\partial^{2}}{\partial x^{2}} \Bigr) \psi_{n,\pn}(x) = E_{n,\pn}\psi_{n,\pn}(x),
\label{eq_heat_eq}
\end{equation}
and it is straightforward to construct solutions $\psi_{n,\pn}$ of \eqref{eq_heat_eq}; see Sections \ref{section_integer couplings}. The solutions of the non-stationary Lam\ee equation \eqref{eq_non_stationary_lame_defining}, for suitable $\A$ (as explained in Section \ref{section_integer couplings}), are then given by explicit integrals. We will use the following simple special case in order to illustrate the integral transform method for constructing solutions of \eqref{eq_non_stationary_lame_defining}: let $n,\A$ be integers and denote by $\vartheta_{1}(x)$ the odd Jacobi theta function (see Section \ref{section_notation}), then the function\footnote{The integration contour $\mathcal{C}_{\varepsilon}$ is a simple straight line, from $-\wone$ to $\wone$, shifted in the complex plane by $\varepsilon>0$. It is also shown in Section \ref{section_integer couplings} that the function is analytic and independent of the complex shift.}
\begin{equation}
\varphi_{n}(x) \equiv  \int\limits_{\mathcal{C}_{\varepsilon}} \frac{\d y}{2\wone}  \frac{ \vartheta_{1}'(0)^{\A}(\vartheta_{1}(\pwone x)^{2})^{\half\A}}{\Bigl( \vartheta_{1}(\ptwone ( x +y )) \vartheta_{1} ( \ptwone ( x - y)) \Bigr)^{\A} } \cos( \pwone( n + \A) y)
\end{equation}
is a solution of \eqref{eq_non_stationary_lame_defining} for coupling $(\A , g ) = (\A , \A)$, \ie
\begin{equation}
\Bigl( \frac{2\pi \imag }{\wone^{2}} \A \frac{\partial}{\partial \tau}  - \frac{\partial^{2}}{\partial x^{2}} + \A(\A - 1) \wp(x) -E_{n} \Bigr) \varphi_{n}(x) = 0 \quad (n,\A \in \Z)
\label{eq_simple_case}
\end{equation}
with $E_{n} = (\pi/\wone)^{2} (n+\A)^{2}- 4 \A(\A-1) \frac{\eta_{1}}{\wone}$ and $\eta_{1}$ in \eqref{eq_eta_constant}. The results above is also a special case of the results discussed in \cite{Langmann_Takemura}. 
We show in Section \ref{section_conclusions} that the integral transforms are not always unique, in the sense that there exists distinct integral kernels that yield (seemingly different) solutions of \eqref{eq_non_stationary_lame_defining}. We can therefore construct many different integral representations of solutions (see discussion in Section \ref{section_conclusions}) and will thus focus on particular solutions: 
the non-stationary Lam\ee equation \eqref{eq_non_stationary_lame_defining} reduces to an eigenvalue equation for a Schr{\"o}dinger operator, with the P{\"o}schl-Teller potential $\propto \sin(\pi  x / \wone)^{-2}$, in the\emph{ trigonometric limit }$\Im(\tau) \to +\infty$, which has well-known eigenfunctions given by the Gegenbauer polynomials; 
We construct the particular solutions $\psi_{n,g}(x)$ of \eqref{eq_non_stationary_lame_defining} such that they reduce to these well-known eigenfunctions in Section \ref{section_trig_case}, and it is therefore natural to interpret $\psi_{n,g}(x)$ as an elliptic generalization of these eigenfunctions. 

Before we proceed to elaborate on our statements above, we wish to discuss the relations with the results in our paper \cite{AtaiLangmann2016_2} where we constructed solutions of the non-stationary Heun equation using kernel function methods. The solutions of the non-stationary Lam\ee equation, in this paper, corresponds to the special case of the non-stationary Heun equation where all the Heun model parameters $g_{\nu}$ are set to $g$ (see Section \ref{section_conclusions}). In \cite{AtaiLangmann2016_2}, the kernel function is used for constructing a non-trivial set of unconventional basis functions for the non-stationary Heun equation, with particular model parameters (as discussed below). The basis functions are then used in order to solve the non-stationary Heun equation by a perturbative algorithm and the kernel function method in \cite{AtaiLangmann2016_2} reduces solving a partial differential equation (see \eqref{eq_non_stationary_heun_defining}) to finding coefficients satisfying a particular differential-difference equation (see Section 3 of \cite{AtaiLangmann2016_2}). 
In this perturbative construction there appears to be problems with resonance, \ie vanishing generalized energy differences, and the Heun model parameters are restricted by, what we refer to as, \emph{no-resonance conditions} (see also Remark \ref{remark_resonance_cases}). The resonance cases seem to be most prevalent when the Heun model parameters take integer values, and they also occur in the non-stationary Lam\ee case (see Remark \ref{remark_resonance_cases}). 
The iterative integral approach is especially suitable for integer parameters as the iterative integrals becomes more elementary. We therefore give a particular focus on the integer case in Section \ref{section_integer couplings} before considering the arbitrary coupling case in Section \ref{section_arbitrary_couplings}.

\subsection{Notation}\label{section_notation}
In this section we collect some of the common notations and definitions that are used in this paper.
In order to simplify notations in the proceeding discussions we can set $$\wone \equiv \pi$$ without loss of generality. (See Appendix B of \cite{AtaiLangmann2016_2} on how to re-introduce arbitrary $\wone$.)
We use standard notations \cite{WhitWat} for the elliptic \emph{nom\ee} 
\begin{equation}q \equiv \exp( \imag \pi \tau)\end{equation} and $\eta_{1}$,
\begin{equation}
\frac{\eta_{1}}{\pi} \equiv \frac{1}{12} - 2 \sum_{n\in\N} \frac{q^{2n}}{(1-q^{2n})^{2}}.
\label{eq_eta_constant}
\end{equation}
We also define 
\begin{equation}
G \equiv \prod_{n\in\N}( 1 - q^{2n} ),
\label{eq_G_constant}
\end{equation}
where it is straightforward to check that 
\begin{equation}
\frac{\imag}{\pi} \frac{1}{G} \frac{\partial G}{\partial \tau} = -\frac{1}{G}q\frac{\partial G}{\partial q} = \frac{1}{12} - \frac{\eta_{1}}{\pi}
\end{equation}
which is needed as we proceed.

We define the $\theta_{\nu}$-functions ($\nu=1,2,3,4$) as
\begin{equation}\begin{split}
\theta_{1}(x) \equiv G \sin(x) \prod\limits_{n\in \N}( 1 - 2 q^{2n} \cos(2x) + q^{4n }) ,& \quad \theta_{2}(x) \equiv G \cos(x) \prod\limits_{n\in\N}( 1 + q^{2n} \cos(2x) + q^{4n}) \\
\theta_{3}(x) \equiv G \prod\limits_{n\in\N}( 1 + 2 q^{2n-1}\cos(2x) + q^{4n-2} ) , &\quad \theta_{4}(x) \equiv  G \prod\limits_{n\in\N}( 1 - 2 q^{2n-1}\cos(2x) + q^{4n-2} ) .
\end{split}\label{eq_Jacobi_theta_functions}\end{equation}
(Note that this notation deviates from our notation in \cite{AtaiLangmann2016_2}.)
We would also like to point out that the $\theta_{\nu}$-functions are essentially the Jacobi $\vartheta_{\nu}$-functions \cite{WhitWat} (see Appendix \ref{appendix_trig_limit}): $\theta_{\nu} = (2q^{\quarter})^{-1} \vartheta_{\nu}$ for $\nu=1,2$ and $\theta_{\nu}=\vartheta_{\nu}$ for $\nu=3,4$.
We also define the functions
\begin{subequations}
\begin{align}
\Theta_{1}(\xi) &\equiv ( 1 - \xi) \prod_{l \in \N} ( 1 - q^{2l} \xi) ( 1 - q^{2l} \xi^{-1}) \\
\Theta(z, \xi) &\equiv ( 1 - 2 z \xi + \xi^{2} ) \prod_{l \in \N} ( 1 - 2q^{2l} z  \xi + q^{4l} \xi^{2})( 1 - 2 q^{2l} z \xi^{-1} + q^{4l} \xi^{-2})
\end{align}
\end{subequations}
which are closely related to the $\theta_{1}$ function (see Appendix \ref{appendix_special_functions}).  

We denote by $\N_{\geq\A}$ the set of all positive integers larger than, or equal to, $\kappa$. (Recall that we only consider the cases where $\A$ is real.)

\subsection{Trigonometric limit}\label{section_trig_case}

To set the results of the next Section in perspective, we consider the non-stationary Lam\ee equation in the trigonometric limit $\Im(\tau) \to + \infty$. It follows from known  relations (see Appendix \ref{section_trig_case}) that the non-stationary Lam\ee equation in \eqref{eq_non_stationary_lame_defining} reduces to the eigenvalue equation 
\begin{equation}
\Bigl( -\frac{\partial^{2}}{\partial x^{2}} + g ( g -1 ) \frac{1}{\sin( x )^{2}} \Bigr) \psi^{(0)}_{n,g}(x) = E_{n,g}^{(0)} \psi^{(0)}_{n,g}(x),
\label{eq_Poschl_Teller_Hamiltonian}
\end{equation}
in the trigonometric limit. 
The Schr{\"o}dinger operator in \eqref{eq_Poschl_Teller_Hamiltonian} is a special case of a Hamiltonian operator with the P{\"o}schl-Teller potential, and has well-known eigenfunctions given by
\begin{equation}
\psi_{n}^{(0)}(x;g) \equiv (\sin(x)^{2})^{\half g} C_{n}^{(g)}( \cos(x) ) \quad (n\in \N_{0}),
\label{eq_Poschl_Teller_eigenfunctions}
\end{equation}
where $C_{n}^{(g)}$ are the \emph{Gegenbauer polynomials},\footnote{Also known as the \emph{ultra-spherical polynomials}.} and $ E_{n,g}^{(0)} \equiv (n + g)^{2}$ the corresponding eigenvalues.

The functions in \eqref{eq_Poschl_Teller_eigenfunctions} are the unique (non-zero) normalizable solutions of a particular self-adjoint extension of the Schr{\"o}dinger operator in \eqref{eq_Poschl_Teller_Hamiltonian}.

\section{Integer coupling}\label{section_integer couplings}

In this section we will mainly focus on the special case that violates the no-resonance conditions of Section 2.4 of our paper \cite{AtaiLangmann2016_2}, as explained in Section \ref{section_sketch_of_solutions}. This entails that we give particular focus on the special case where $(\A,g)$ in \eqref{eq_non_stationary_lame_defining} take (positive) integer values. (See Section \ref{section_conclusions} for a discussion regarding negative couplings.)
We wish to stress that our approach in this Section can be easily extended for suitably complex $x$, but that we choose $\Im(x)=0$ in order to not obfuscate the results below with technicalities.

\subsection{Results}

We construct particular solutions of the non-stationary Lam\ee equation in \eqref{eq_non_stationary_lame_defining} of the form
\begin{equation}
\psi_{n,g}(x) \equiv (\theta_{1}(x)^{2})^{\half g} \mathcal{P}_{n,g}(\cos(x)) \quad ( x \in [-\pi , \pi ] , n \in \N_{0} )
\label{eq_ns_lame_eigenfunctions}
\end{equation}
with $\mathcal{P}_{n,g}(z)$ an analytic function\footnote{A precise definition of an analytic function of several (complex) variables is found in Chapter II of \cite{hormander}. See also Lemma \ref{lemma_Hartogs_theorem}.} of $z$ and the nom\ee $q$, that has the expansion
\begin{equation}
\mathcal{P}_{n,g}(z) = C_{n}^{(g)}(z) + \sum_{\ell=1}^{\infty} \mathcal{P}^{(\ell)}_{n,g}(z) q^{2\ell} \quad (n \in \N_{0})
\label{eq_ns_lame_polynomial_expansion}
\end{equation}
where $C_{n}^{(g)}$ are the Gegenbauer polynomials and $\mathcal{P}_{n,g}^{(\ell)}$ a polynomial of order $n+2\ell$.
We refer to the integer $n$ in $\psi_{n,g}(x)$, and $\mathcal{P}_{n,g}$, as the \emph{degree}, as it corresponds to the polynomial degree in the $q=0$ case.
It follows from \eqref{eq_ns_lame_eigenfunctions}--\eqref{eq_ns_lame_polynomial_expansion}, and \eqref{eq_trig_limit_theta}, that these particular solutions $\psi_{n,g}(x)$ reduce to the eigenfunctions in \eqref{eq_Poschl_Teller_eigenfunctions} in the trigonometric limit $q \to 0$. The significance of the generalized energy eigenvalue then becomes clear as we require a function $E_{n,g}$ in \eqref{eq_non_stationary_lame_defining}, with the expansion $E_{n,g}= (n+g)^{2}+ \mathcal{O}(q)$, in order to obtain the corresponding eigenvalue equation \eqref{eq_Poschl_Teller_Hamiltonian} in the trigonometric limit.

We are now in a position where we can introduce one of the integral transform operators: Let $\A$ be an integer and $\psi_{n,g}(x)$, $x\in [ -\pi , \pi ]$, a function of the form \eqref{eq_ns_lame_eigenfunctions}-\eqref{eq_ns_lame_polynomial_expansion}, then the integral operator $K_{g+\A,g}$ is defined as
\begin{equation}
(K_{g+\A, g} \psi_{n,g})(x) \equiv \mathcal{N}_{n,g,\A} \int_{-\pi + \imag \varepsilon}^{\pi + \imag \varepsilon} \frac{\d y}{2\pi} \frac{(\theta_{1}( x )^{2})^{\half(g+\A)} (\theta_{1}(y)^{2})^{\half g}}{ (\theta_{1}(\half(x+y)) \theta_{1}(\half(x-y)))^{2g + \A}} \psi_{n,g}(y),
\label{eq_integral_transform_1}
\end{equation}
with $0 < \varepsilon < \pi \Im(\tau)$, and normalization constant $\mathcal{N}_{n,g\A}$ given by
\begin{equation}
\mathcal{N}_{n,g,\A} \equiv 2^{2(g+\A)}  \frac{n!}{(2g+\A)_{n-\A} (g)_{\A}},  
\label{eq_norm_constant_1}
\end{equation}
where $(x)_{n}$ denotes the (raising) Pochhammer symbol (see Appendix \ref{appendix_special_functions}). 
The normalization in \eqref {eq_norm_constant_1} ensure that the integral transform preserves the expansion in \eqref{eq_ns_lame_polynomial_expansion}, \ie the coefficient in front of the Gegenbauer polynomial in the expansion of \eqref{eq_integral_transform_1} is $1$. 

As we show below, the integral is well-defined, and independent of $\varepsilon$, if $\psi_{n,g}$ is of the form \eqref{eq_ns_lame_eigenfunctions}-\eqref{eq_ns_lame_polynomial_expansion} and $\A \in \Z\setminus\{0\}$.
In fact, it is shown that the integral transform is well-defined for any $2\pi$-periodic function $\phi(x)$ that is analytic in the strip $\{x\in\C : |\Im(x)|<\pi\Im(\tau)\}$ (recall that $\Im(\tau)>0$).

(Recall the definition of $\N_{\geq\A}$ in Section \ref{section_notation} and $\psi_{n,g}$ in \eqref{eq_ns_lame_eigenfunctions}-\eqref{eq_ns_lame_polynomial_expansion}.)

\begin{theorem}\label{theorem_1}
Let $\A \in \Z$, $n\in\N_{\geq\A}$, $g\geq 0$ be fixed, and $\psi_{n,g}(x)$ be of the form \eqref{eq_ns_lame_eigenfunctions}-\eqref{eq_ns_lame_polynomial_expansion}, as well as a solution of the non-stationary Lam\ee equation \eqref{eq_non_stationary_lame_defining}, for parameters $(\A , g)$, with some $E_{n,g}$. Then the function
\begin{equation}
( K_{g+\A,g} \psi_{n,g})(x)
\label{eq_theorem_integral_transform}
\end{equation}
is a solutions of the non-stationary Lam\ee equation \eqref{eq_non_stationary_lame_defining}, with parameters $(\A,g+\A)$, for 
\begin{equation}
E = E_{n,g}+ 4\A( 1 - 2g -3 \A)\frac{\eta_{1}}{\pi} + 6 \A(2g+\A)\bigl( \frac{\eta_{1}}{\pi} - \frac{1}{12}\bigr).
\end{equation}
Moreover, the function in \eqref{eq_theorem_integral_transform} is an analytic function of $x$ and $\tau$ \emph{(}$\Im(\tau)>0$\emph{)}, independent of $\varepsilon$, and of the form \eqref{eq_ns_lame_eigenfunctions}-\eqref{eq_ns_lame_polynomial_expansion} for coupling $(\A , g+\A)$ and degree $n-\A$.
\end{theorem}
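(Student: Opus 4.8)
The plan is to verify the claim in three stages: first that the integral transform $K_{g+\A,g}$ is well-defined and analytic (independent of $\varepsilon$), second that it produces a function of the prescribed form \eqref{eq_ns_lame_eigenfunctions}--\eqref{eq_ns_lame_polynomial_expansion} with degree $n-\A$, and third that it intertwines the two non-stationary Lam\'e operators with the stated energy shift. For the first stage I would examine the integrand of \eqref{eq_integral_transform_1}: since $\psi_{n,g}$ is $2\pi$-periodic and analytic in the strip $|\Im(y)|<\pi\Im(\tau)$, and the kernel is built from $\theta_1$, which has its only zeros at lattice points, the factor $(\theta_1(\half(x\pm y)))^{-(2g+\A)}$ is analytic for $y$ on the contour $\mathcal{C}_\varepsilon$ provided $0<\varepsilon<\pi\Im(\tau)$; the integrand is $2\pi$-periodic in $y$ (using $\theta_1(y+\pi)=-\theta_1(y)$ and that $2g+\A$ enters the denominator an even power up to sign handled by the $\theta_1(x)^{\dots}$ prefactor — this is the point to be careful about, see below), so Cauchy's theorem gives independence of $\varepsilon$, and analyticity in $x$ and $\tau$ follows by differentiation under the integral sign plus Hartogs/the lemma referenced in the excerpt.

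For the second stage I would expand everything in powers of $q$. The kernel, written via $\Theta_1$ and $\Theta$, has a $q=0$ limit that reduces to an elementary trigonometric kernel, and one checks that at $q=0$ the integral \eqref{eq_integral_transform_1} with $\psi_{n,g}(y)\to(\sin(y)^2)^{g/2}C_n^{(g)}(\cos y)$ reproduces $(\sin(x)^2)^{(g+\A)/2}C_{n-\A}^{(g+\A)}(\cos x)$ up to a constant — this is exactly the classical integral (a Gegenbauer-type connection formula) that fixes the normalization $\mathcal{N}_{n,g,\A}$ in \eqref{eq_norm_constant_1} so that the leading coefficient is $1$; the shift $n\mapsto n-\A$ in the degree appears here, which forces the hypothesis $n\in\N_{\geq\A}$. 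At higher orders in $q^{2\ell}$ one checks inductively that the kernel expansion times the polynomial expansion \eqref{eq_ns_lame_polynomial_expansion} integrates term-by-term to something of the form (prefactor)$\times$(polynomial of degree $(n-\A)+2\ell$), giving the claimed form.

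The heart of the matter is the third stage, the intertwining. I would use the kernel function identity — the statement that the Lam\'e operator in the variable $x$ with coupling $g+\A$, acting on the kernel $\mathcal{K}(x,y)$, equals the Lam\'e operator in $y$ with coupling $g$ acting on the same kernel, plus a term proportional to $\A\,\partial_\tau$ of the kernel, plus a constant (the energy shift). Concretely: let $\mathcal{H}_g$ denote the operator on the left of \eqref{eq_non_stationary_lame_defining}; one establishes
\begin{equation*}
\Bigl(\mathcal{H}_{g+\A}^{(x)} - \mathcal{H}_{g}^{(y)} - c_{g,\A}\Bigr)\mathcal{K}(x,y) = \text{(total }y\text{-derivative)},
\end{equation*}
with $c_{g,\A} = 4\A(1-2g-3\A)\frac{\eta_1}{\pi}+6\A(2g+\A)(\frac{\eta_1}{\pi}-\frac{1}{12})$. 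The $\partial_\tau$ pieces require the logarithmic $\tau$-derivatives of $\theta_1$ and of $G$ recorded in the Notation section (this is where the $\eta_1$ and the $\frac{1}{12}$ terms in $E$ are generated), and the elliptic function identities needed are addition theorems for $\wp$ and the standard relation $\partial_\tau\log\theta_1 \sim \partial_x^2\log\theta_1$. Granting this identity, one applies $\mathcal{H}_{g+\A}^{(x)}$ under the integral in \eqref{eq_integral_transform_1}, trades it for $\mathcal{H}_g^{(y)}$ plus the constant, integrates by parts in $y$ (the total derivative integrates to zero because the integrand is $2\pi$-periodic and the contour is closed up to periodicity), and then uses the hypothesis that $\psi_{n,g}$ solves \eqref{eq_non_stationary_lame_defining} for $(\A,g)$ with eigenvalue $E_{n,g}$ to conclude that $(K_{g+\A,g}\psi_{n,g})(x)$ solves it for $(\A,g+\A)$ with $E=E_{n,g}+c_{g,\A}$. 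The main obstacle is precisely the verification of the kernel identity including the correct $\tau$-derivative bookkeeping; the boundary-term vanishing and the analyticity are comparatively routine once periodicity and the strip of analyticity are in hand.
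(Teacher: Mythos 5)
Your proposal is correct and follows essentially the same route as the paper: the intertwining via the generalized kernel function identity for the $\theta_1$-kernel (with the $\tau$-derivative of the $G$-prefactor supplying the $6\A(2g+\A)(\eta_1/\pi-1/12)$ piece) followed by integration by parts with boundary terms vanishing by $2\pi$-periodicity, analyticity and $\varepsilon$-independence from the pole structure of $\theta_1$ together with Hartogs' theorem, and preservation of the form \eqref{eq_ns_lame_eigenfunctions}--\eqref{eq_ns_lame_polynomial_expansion} by expanding the kernel in $q$ and using the $q=0$ Gegenbauer integral to fix $\mathcal{N}_{n,g,\A}$ and produce the degree shift $n\mapsto n-\A$. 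These are precisely Lemmas \ref{lemma_gen_KF_id_1}, \ref{lemma_KF_analytic}, and \ref{lemma_space_endomorphism} of the paper.
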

\noindent (The proof is given in Section \ref{section_proof_of_theorem_1}.)

It follows from Theorem \ref{theorem_1} that the integral transform can be considered as a \emph{step operator} and that the functions $\psi_{n,g}$ satisfy 
\begin{equation}
\psi_{n-\A,g+\A}(x) = ( K_{g+\A,g}\psi_{n,g})(x).
\label{eq_recursive_integral_relation_1}
\end{equation}
Following the procedure in \ref{section_sketch_of_solutions} allows us to construct solutions of \ref{eq_non_stationary_lame_defining} by \eqref{eq_recursive_integral_solutions}, with the integral transform in \eqref{eq_integral_transform_1}. The cases where $\gn = \pn \in \{ 0 , 1 \}$ allows for explicit solutions of the non-stationary Lam\ee equation by using the seed functions
\begin{equation}
\begin{split}
\psi_{n,0}(x) &= \cos( n x) \\
\psi_{n,1}(x) &= (\sin( ( n+1)x )^{2})^{\half} / G^{3/\A}
\end{split} \quad (n \in \N_{0}, \A \neq 0).
\label{eq_seed_functions_1}
\end{equation}
\begin{remark}
Note that the functions in \eqref{eq_seed_functions_1} are still of the form \eqref{eq_ns_lame_eigenfunctions}-\eqref{eq_ns_lame_polynomial_expansion} (in the $\pn = 1$ case this follows from
\begin{equation}
(\sin( (n+1)x )^{2})^{\half} = (\theta_{1}(x)^{2})^{\half}  \Bigl(\frac{\sin((n+1)x)^{2}}{\theta_{1}(x)^{2}}\Bigr)^{\half}, 
\end{equation}
where
\begin{equation} \Bigl(\frac{\sin((n+1)x)^{2}}{\theta_{1}(x)^{2}}\Bigr)^{\half} = U_{n}(\cos(x)) + \sum_{\ell\in\N} \mathcal{P}_{n,1}^{(\ell)}(\cos(x)) q^{2\ell}
\label{eq_sine_expansion}
\end{equation}
with $U_{n}(z)$ the Chebyshev polynomials of second kind and $ \mathcal{P}_{n,1}^{(\ell)}(z)$ a polynomial of order $n+2\ell$).
\end{remark}

 Let $\A>0$, $\numb \in \N$, $x\in [-\pi , \pi ]$, and $0 < \varepsilon_{\numb} < \ldots < \varepsilon_{1} < \pi \Im(\tau)$.\footnote{The ordering of follows from Lemma \ref{lemma_KF_analytic} and Eq. \eqref{eq_analytic_region}.}  Then the solutions of the \eqref{eq_non_stationary_lame_defining} for parameters $(\A, \numb \A + \pn )$ has the explicit integral representation
\begin{equation}
\begin{split}
\psi_{n,\numb \A}(x) = (\theta_{1}(x)^{2})^{\half \numb \A} \mathcal{N}_{(0)}\int\limits_{\mathcal{C}_{\varepsilon_{\numb}}} \frac{\d y_{\numb}}{2\pi} \cdots \int\limits_{\mathcal{C}_{\varepsilon_{1}}} \frac{\d y_{1}}{2\pi} \frac{( \theta_{1}(y_{\numb})^{2})^{(\numb-1)\A}}{(\theta_{1}(\half(x+y_{\numb})) \theta_{1}(\half ( x-y_{\numb}) ) )^{(2\numb-1)\A}}  \\ 
\times \prod\limits_{j=1}^{\numb-1} \frac{(\theta_{1}(y_{j})^{2})^{(j-1) \A}}{( \theta_{1}(\half(y_{j+1} + y_{j})) \theta_{1}(\half(y_{j+1}-y_{j})))^{(2j-1)\A}} \cos( (n+\numb\A)y_{1}),
\end{split}
\end{equation}
with normalization constant
\begin{equation}
\begin{split}
\mathcal{N}_{(0)} &\equiv \mathcal{N}_{n+\numb \A,\numb \A , \A} \mathcal{N}_{n+\numb \A,(\numb - 1) \A , \A} \cdots \mathcal{N}_{n+\numb \A, \A , \A} \\  
&= 2^{ \A\numb(\numb+3)} \e^{\imag \pi \half \numb( \numb +1 ) \A} \prod_{j=1}^{\numb} \frac{\Gamma(n + \numb \A + 1)\Gamma((2j+1)\A) \Gamma(j\A)}{\Gamma( n + (N+ 2j) \A )\Gamma((j+1)\A)},
\end{split}
\label{eq_norm_constant_full_0}
\end{equation}
for $\pn = 0$ and
\begin{equation}
\begin{split}
\psi_{n,\numb \A + 1}(x) &= (\theta_{1}(x)^{2})^{\half(\numb \A + 1)} \mathcal{N}_{(1)}\int\limits_{\mathcal{C}_{\varepsilon_{\numb}}} \frac{\d y_{\numb }}{2\pi} \cdots \int\limits_{\mathcal{C}_{\varepsilon_{1}}}  \frac{\d y_{1}}{2\pi}\frac{(\theta_{1}(y_{\numb })^{2})^{(\numb-1)\A + 1}}{( \theta_{1}(\half(x+y_{\numb  })) \theta_{1}(\half(x-y_{\numb })))^{(2\numb - 1 )\A +2} } \\ 
&\times  \prod\limits_{j=1}^{\numb - 1} \frac{(\theta_{1}(y_{j})^{2})^{(j-1)\A+1}}{(\theta_{1}(\half(y_{j+1} + y_{j})) \theta_{1}(\half( y_{j+1} - y_{j})))^{(2j-1)\A+2}}
\Bigl(\frac{\sin( ( n + \numb \A + 1 ) y_{1} )^{2} }{G^{3 / \A} \theta_{1}(y_{1})^{2}}\Bigr)^{\half},
\end{split}
\label{eq_solution_1}
\end{equation}
with normalization constant
\begin{equation}
\begin{split}
\mathcal{N}_{(1)} & \equiv \mathcal{N}_{n+\numb \A,\numb \A +1 , \A} \mathcal{N}_{n + \numb \A , (\numb - 1 ) \A + 1 , \A } \cdots \mathcal{N}_{ n + \numb \A , \A + 1 , \A}\\
&= 2^{\numb((\numb + 3 )\A +2 )} \e^{-\imag \half \pi( \numb (\numb +1)\A+2)} \prod_{j=1}^{\numb} \frac{\Gamma(n+\numb\A+1) \Gamma( (2j+1)\A +2) \Gamma(2\A +1)}{\Gamma(n+(\numb + 2j) \A +2 ) \Gamma((j+1)\A +1)},
\end{split}
\label{eq_norm_constant_full_1}
\end{equation}
for $\pn = 1$. (We note that the representation in \eqref{eq_solution_1} also works for negative parameters $-1 < \A < 0$ and positive integer $\numb$ such that $\numb\A+ 1 >0$.)

It is then also quite straightforward to obtain the following properties of these particular solutions of the non-stationary Lam\ee equation from Theorem \ref{theorem_1}:
\begin{align}
\psi_{n,g}(-x) &= \psi_{n,g}(x) \\
\psi_{n,g}(x + 2\pi) &= \psi_{n,g}(x).
\end{align}
(Theorem \ref{theorem_1} also suggest that $\psi_{n,g}(x+2\pi\tau) = \exp(2\imag \A ( x +\pi \tau) )\psi_{n,g}(x)$ but this we do not prove.)

\subsection{Proof of Theorem \ref{theorem_1}}\label{section_proof_of_theorem_1}
In order to prove Theorem \ref{theorem_1} we proceed with three key Lemmas; we will show that the integral transform satisfies the non-stationary Lam\ee equation in \eqref{eq_non_stationary_lame_defining} (for some $E$). This follows from the \emph{generalized kernel function identity} in Lemma \ref{lemma_gen_KF_id_1}; We then show that the integral transform yields an analytic function if the seed function is analytic and of the form \eqref{eq_ns_lame_eigenfunctions}-\eqref{eq_ns_lame_polynomial_expansion} (see Lemma \ref{lemma_KF_analytic}); Finally, in Lemma \ref{lemma_space_endomorphism} it is shown that the integral transform preserves the general forms in \eqref{eq_ns_lame_eigenfunctions}-\eqref{eq_ns_lame_polynomial_expansion} if $\psi_{n,g}$ is of the same form.  and determine the normalization constant in \eqref{eq_norm_constant_1}.

Theorem \ref{theorem_1} follows then from Lemmas \ref{lemma_gen_KF_id_1}, \ref{lemma_KF_analytic}, and \ref{lemma_space_endomorphism}.

In order to simplify notations below, we introduce the Lam\ee differential operator $H_{\L}(x;g)$, defined as 
\begin{equation}
H_{\L}(x;g) \equiv -\frac{\partial^{2}}{\partial x^{2}} + g(g-1) \wp(x),
\label{eq_Lame_operator}
\end{equation}
which allows us to write the non-stationary Lam\ee equation as
\begin{equation}
\Bigl( \frac{2\imag }{\pi} \A \frac{\partial }{\partial \tau} + H_{\L}(x;g) - E \Bigr)\psi(x) = 0.
\end{equation}
(Recall the definitions of the odd Jacobi theta function in \eqref{eq_Jacobi_theta_functions}). 
The first part of Theorem \ref{theorem_1} follows from the generalized kernel function identity:
\begin{lemma}\label{lemma_gen_KF_id_1}
The {kernel function}
\begin{equation}
\kk(x,y;g,\A) \equiv \frac{(2q^{\quarter})^{-(2g+\A)} \vartheta_{1}'(0)^{2g+\A} \theta_{1}(x)^{g+\A}\theta_{1}(y)^{g}}{( \theta_{1}(\half(x+y)) \theta_{1}(\half(x-y)))^{2g+\A}}
\label{eq_kernel_function}
\end{equation}
satisfies the {generalized kernel function identity}
\begin{equation}
\Bigl( \frac{2\imag}{\pi} \A \frac{\partial }{\partial \tau} + H_{\L}(x;g+\A) - H_{\L}(y; g ) - C \Bigr) \kk(x,y) = 0,
\label{eq_kernel_function_identity}
\end{equation}
with $C = 4 \A ( 1 - 2 g - \A )\frac{\eta_{1}}{\pi}$.
\end{lemma}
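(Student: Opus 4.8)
The plan is to verify the generalized kernel function identity \eqref{eq_kernel_function_identity} by direct computation, exploiting the product structure of the kernel in \eqref{eq_kernel_function}. First I would take the logarithmic derivative: writing $\kk = \exp(F)$ with $F = (g+\A)\log\theta_1(x) + g\log\theta_1(y) - (2g+\A)\log\theta_1(\half(x+y)) - (2g+\A)\log\theta_1(\half(x-y)) + (\text{$\tau$-dependent constant})$, the action of the operator $\tfrac{2\imag}{\pi}\A\partial_\tau - \partial_x^2$ on $\kk$ becomes $\kk\bigl(\tfrac{2\imag}{\pi}\A F_\tau - F_{xx} - F_x^2\bigr)$. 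The term $g(g+\A-1)\wp(x)$ in $H_{\L}(x;g+\A)$ must be matched against the $\theta_1(x)$ contributions, and $-g(g-1)\wp(y)$ against the $\theta_1(y)$ ones; the key analytic input is the standard identity $\partial_x^2\log\theta_1(x) = -\wp(x) + (\text{const in }x)$, i.e. $\theta_1''/\theta_1 - (\theta_1'/\theta_1)^2 = -\wp(x) - \eta_1/\pi$ (with $\wone=\pi$), together with the heat equation $\partial_\tau\theta_1 = \tfrac{\imag}{4\pi}\partial_x^2\theta_1$ (suitably normalized), which is what converts the $\partial_\tau$ action on the $q$-dependence into second derivatives in the arguments.

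The second and more delicate step is handling the cross terms $F_x^2$. Differentiating $F$ in $x$ gives a sum of Weierstrass-zeta-type functions $\zeta(x)$, $\zeta(\half(x+y))$, $\zeta(\half(x-y))$ (logarithmic derivatives of $\theta_1$), and squaring produces cross products like $\zeta(\half(x+y))\zeta(\half(x-y))$. These must be reduced using the addition formula for the Weierstrass $\zeta$/$\wp$ functions — concretely $\zeta(u)\zeta(v)$-type products combine, via $\zeta(u+v) = \zeta(u)+\zeta(v) + \tfrac12\tfrac{\wp'(u)-\wp'(v)}{\wp(u)-\wp(v)}$ and the companion $\wp(u)+\wp(v)+\wp(u+v) = \tfrac14\bigl(\tfrac{\wp'(u)-\wp'(v)}{\wp(u)-\wp(v)}\bigr)^2$, into expressions involving only $\wp(x)$, $\wp(y)$, and constants. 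The bookkeeping of which $\A$- and $g$-dependent coefficients land in front of $\wp(x)$, $\wp(y)$, and the constant $C$ is the main obstacle; I expect the $\wp(x)$ coefficient to collapse to exactly $g(g+\A-1)$ and the $\wp(y)$ coefficient to $-g(g-1)$ only after careful use of these addition formulas, and the leftover $x$- and $y$-independent remainder is precisely the constant $C = 4\A(1-2g-\A)\tfrac{\eta_1}{\pi}$.

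Alternatively — and this is probably cleaner to write up — I would verify the identity by checking it order by order in the nome $q$, or equivalently note that the left-hand side of \eqref{eq_kernel_function_identity} applied to $\kk$ is, after dividing by $\kk$, an elliptic function of $x$ (periodicity and quasi-periodicity in $x\mapsto x+\pi$, $x\mapsto x+\pi\tau$ follow from the transformation laws of $\theta_1$ and the cancellation of the multiplier-exponents, using that $g+\A$ and $2g+\A$ are balanced) with at worst double poles at the lattice points $x\in\{0,\pm y\}$ modulo the period lattice. One then shows the principal parts of these poles cancel: near $x=0$ the $\theta_1(x)^{g+\A}$ factor contributes a double pole with coefficient fixed by $(g+\A)$, while the heat-operator and $\wp(x)$ terms contribute matching principal parts; similarly the potential singularities at $x=\pm y$ from the denominator cancel between $-\partial_x^2$, the cross terms, and $-H_{\L}(y;g)$ acting through the $y$-dependence. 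A bounded elliptic function is constant, and evaluating at one convenient point (e.g. expanding around $x=0$ to subleading order, or taking the trigonometric limit $q\to0$ where $\theta_1\to\sin$ and everything is elementary) pins the constant to $C$. The hard part in either route is the same: organizing the Laurent expansions at the singular points so that the cancellation of double and simple poles is manifest, which is where the specific combination $2g+\A$ in the exponent of the denominator is essential.
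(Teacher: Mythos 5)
Your first route is essentially the paper's own proof (Appendix B): logarithmic differentiation of the kernel, the relation $\wp(x) = (\theta_1'(x)/\theta_1(x))^2 - \theta_1''(x)/\theta_1(x) - 4\eta_1/\pi$ for the diagonal terms, the identity $\bigl(\sum_{i}\theta_1'(x_i)/\theta_1(x_i)\bigr)^2 = \sum_{i}\wp(x_i)$ for $x_1+x_2+x_3=0$ (equivalent to the $\zeta$- and $\wp$-addition formulas you cite) to reduce the cross terms, and the theta heat equation \eqref{eq_theta_heat_equation} to trade $\partial_\tau$ for $\theta_1''/\theta_1$ contributions. The only slip is the constant in $\partial_x^2\log\theta_1(x) = -\wp(x) - 4\eta_1/\pi$ (you wrote $-\eta_1/\pi$), which matters for landing exactly on $C = 4\A(1-2g-\A)\eta_1/\pi$; your alternative Liouville-type argument is a viable but genuinely different route that the paper does not take.
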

\noindent (This is a special case of Corollary 3.2 in \cite{Langmann_Takemura} with $N=M=1$, $\beta=-2\pi \imag \tau$, $A_{1,1} = 4 \A$, and all the couplings set to $g$. We recall the proof in Appendix \ref{appendix_proof_of_lemmas}.)

Note that the kernel of the integral transform in \eqref{eq_integral_transform_1} is the kernel function in \eqref{eq_kernel_function} (up to multiplication by an analytic function of $q$). This allows us to consider the action of the operator $(2\imag/\pi)\A (\partial / \partial \tau) + H_{\L}$ on the function in \eqref{eq_theorem_integral_transform}: Assuming for now that we are allowed to interchange differentiation and integrations, then
\begin{align}
&\Bigl( \frac{2\imag}{\pi}\frac{\partial}{\partial \tau} + H_{\L}(x;g+\A) \Bigr) (K_{g+\A,g}\psi_{n,g})(x) \nonumber \\ 
&= \mathcal{N}_{n,g,\A}G^{-3(2g+\A)} \int_{\mathcal{C}_{\varepsilon}} \frac{\d y}{2\pi} \psi_{n,g}(y) \Bigl( \frac{2\imag}{\pi}\A \frac{\partial}{\partial \tau}+ H_{\L}(x;g+\A) \Bigr)  \kk(x,y)  \nonumber\\ 
&+  \mathcal{N}_{n,g,\A}G^{-3(2g+\A)} \int_{\mathcal{C}_{\varepsilon}} \frac{\d y}{2\pi} \frac{2\imag}{\pi}\A \frac{\partial\psi_{n,g}(y)}{\partial \tau} \kk(x,y) +  \frac{2\imag}{\pi}\A\frac{\partial (\mathcal{N}_{n,g,\A}G^{-3(2g+\A)}) }{\partial \tau} \int_{\mathcal{C}_{\varepsilon}} \frac{\d y}{2\pi} \psi_{n,g}(y) \kk(x,y) \nonumber \\
&=  B.T. + \mathcal{N}_{n,g,\A}G^{-3(2g+\A)} \int_{\mathcal{C}_{\varepsilon}} \frac{\d y}{2\pi} \kk(x,y) ( \frac{2\imag}{\pi}\A \frac{\partial}{\partial \tau} + H_{\L}(y;g) + C_{1} ) \psi_{n,g}(y) 
\end{align}
with $$C_{1} \equiv C + \frac{2\imag}{\pi} \A \frac{1}{G^{-3(2g+\A)}} \frac{\partial G^{-3(2g+\A)}}{\partial \tau} = 4\A(1-2g-\A)\frac{\eta_{1}}{\pi} + 6\A(2g+\A) \Bigl( \frac{\eta_{1}}{\pi} - \frac{1}{12}\Bigr) .$$ 
The last line was obtained by using the generalized kernel function identity and integrating by parts (twice), which yields the boundary terms 
$$B.T. \equiv \int_{\mathcal{C}_{\varepsilon}} \frac{\d y}{2\pi} \frac{\partial}{\partial y} \Bigl( \psi_{n,g}(y) \frac{\partial}{\partial y} \kk(x,y) - \kk(x,y)\frac{\partial}{\partial y} \psi_{n,g}(y) \Bigr).$$
It is a straightforward check\footnote{It follows from the fact that the kernel function $\kk(x,y)$ and $\psi_{n,g}(y)$ are analytic, $2\pi$-periodic functions of $y$, for $y\in \mathcal{C}_{\varepsilon}$, $\varepsilon>0$, and $x\in[-\pi , \pi]$.} that $B.T.=0$ if $\psi_{n,g}$ is an analytic function and of the form \eqref{eq_ns_lame_eigenfunctions}-\eqref{eq_ns_lame_polynomial_expansion} and $\mathcal{C}_{\varepsilon} = [-\pi , \pi] + \imag \varepsilon$ with $0 < \varepsilon < \pi \Im(\tau)$ (recall that we set $\Im(x)=0$).

It follows from the (formal) calculations above that the function in \eqref{eq_theorem_integral_transform} is a solution of the non-stationary Lam\ee equation, for parameter $(\A, g+\A)$, with $E= E_{n,g} + C_{1}$, if $\psi_{n,g}(y)$, and $E_{n,g}$, are solutions of \eqref{eq_non_stationary_lame_defining} for parameters $(\A,g)$.

We turn to making the formal arguments above by showing that the integrand is an analytic function of all three variables $x,y,\tau$. It is also shown, in the proof of Lemma \ref{lemma_space_endomorphism}, that the integrand is an analytic function of $\cos(x)$, $\exp(\imag y)$, and $\exp(\imag \pi \tau)$. We also note that the interchange of integration and differentiation (above), is thus justified if the integrand is an analytic function in the integration domain. 

We recall some well-known results about analytic functions in several complex variable:
\begin{lemma}[Hartogs Theorem]
Let $u(z_{1},z_{2},\ldots,z_{N})$ be a complex valued function defined in the open set $\Omega \subset \C^{N}$. If $u$ is analytic in each variable $z_{j}$, when the other variables are given arbitrary fixed values, then $u$ is analytic in $\Omega$.
\label{lemma_Hartogs_theorem}
\end{lemma}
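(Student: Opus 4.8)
The plan is to follow the classical route to Hartogs' theorem on separate analyticity, reducing it in two stages to a compactness statement for subharmonic functions. By the usual induction on $N$ (freezing all but two of the variables at a time) it suffices to treat $N=2$, and by covering $\Omega$ with polydiscs we may assume $\Omega=\Delta_1\times\Delta_2$ is a bidisc. The first stage is Osgood's lemma: if, in addition to being separately analytic, $u$ is \emph{continuous} (or merely locally bounded) on $\Delta_1\times\Delta_2$, then applying the one–variable Cauchy formula in $z_2$ inside the one in $z_1$ gives
\begin{equation}
u(z_1,z_2) = \frac{1}{(2\pi\imag)^{2}} \int_{\partial\Delta_1}\int_{\partial\Delta_2} \frac{u(\zeta_1,\zeta_2)}{(\zeta_1-z_1)(\zeta_2-z_2)}\,\d\zeta_2\,\d\zeta_1 ,
\end{equation}
and expanding the kernel in a geometric double series exhibits $u$ near any interior point as an absolutely, locally uniformly convergent power series in $(z_1,z_2)$, hence jointly analytic; the boundedness is exactly what legitimizes the expansion and the interchange of sum and integral.

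The second stage removes the regularity hypothesis. Fix a relatively compact subbidisc $\Delta_1'\times\Delta_2$ and run a Baire category argument: the sets $E_m=\{z_2\in\Delta_2:\sup_{z_1\in\overline{\Delta_1'}}|u(z_1,z_2)|\le m\}$ are closed (for each $z_1$ the slice $z_2\mapsto|u(z_1,z_2)|$ is continuous, being analytic, so the supremum is lower semicontinuous) and they exhaust $\Delta_2$, hence some $E_m$ has non-empty interior; thus $u$ is bounded on $\Delta_1'\times\omega$ for an open $\omega\subset\Delta_2$, and the locally bounded form of Osgood's lemma makes $u$ jointly analytic there. To propagate analyticity to the whole bidisc, expand each slice as $u(z_1,z_2)=\sum_{k\ge0}a_k(z_2)z_1^{k}$. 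The coefficients $a_k(z_2)=\frac{1}{2\pi\imag}\int_{|\zeta_1|=r}u(\zeta_1,z_2)\zeta_1^{-k-1}\,\d\zeta_1$ are analytic in $z_2$ on all of $\Delta_2$ (separate analyticity plus continuity of $u$ in $\zeta_1$), and since every slice is analytic on the full disc $\Delta_1$, of radius $r_1$ say, one has the pointwise bound $\limsup_k|a_k(z_2)|^{1/k}\le 1/r_1$, while the joint bound on $\Delta_1'\times\omega$ gives the sharper Cauchy estimate $|a_k(z_2)|\le M\rho^{-k}$ for $z_2\in\omega$.

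The heart of the matter — and the step I expect to be the main obstacle — is upgrading these two pieces of information into a \emph{locally uniform} estimate $|a_k(z_2)|\le C_K\,r^{-k}$ on each compact $K\subset\Delta_2$ and each $r<r_1$; this forces $\sum_k a_k(z_2)z_1^{k}$ to converge locally uniformly on $\Delta_1\times\Delta_2$, and a locally uniform limit of jointly analytic terms is jointly analytic, which finishes the proof (the general $N$ then follows by the stated induction). The upgrade is exactly Hartogs' lemma on subharmonic functions: the $v_k=\frac1k\log|a_k|$ are subharmonic on $\Delta_2$, locally uniformly bounded above, and satisfy $\limsup_k v_k\le-\log r_1$ pointwise, from which the sub-mean-value / Harnack-type compactness argument yields $v_k\le-\log r_1+\varepsilon$ uniformly on compacts for $k$ large. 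The delicate points are establishing the uniform upper bound on the $v_k$ (combining the Cauchy representation of $a_k$ on $|\zeta_1|=r_1'$ with the semicontinuity already exploited in the Baire step) and then proving the subharmonic compactness lemma itself; once those are granted the conclusion is immediate. For a fully detailed treatment one may consult Chapter II of \cite{hormander}.
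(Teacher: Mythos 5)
The paper does not actually prove this lemma: it is quoted as a classical result with the proof deferred to Chapter II of \cite{hormander}, so the only thing to compare your sketch against is that reference, and your outline is indeed the standard argument found there and in other textbooks --- Osgood's lemma via the iterated Cauchy formula, a Baire category argument producing a subregion $\Delta_1'\times\omega$ of joint boundedness, and Hartogs' lemma on subharmonic functions to upgrade the pointwise coefficient decay to a locally uniform estimate. As a roadmap this is fine, and you correctly flag the subharmonic compactness lemma and the uniform upper bound as the hard technical content. The one place where your sketch asserts something it cannot yet deliver is the claim that the coefficients $a_k(z_2)=\frac{1}{2\pi\imag}\int_{|\zeta_1|=r}u(\zeta_1,z_2)\,\zeta_1^{-k-1}\,\d\zeta_1$ are analytic on \emph{all} of $\Delta_2$ ``by separate analyticity plus continuity of $u$ in $\zeta_1$'': passing analyticity in $z_2$ through the contour integral (Morera plus Fubini, or uniform approximation by Riemann sums) requires joint boundedness or joint continuity of $u$ on $\{|\zeta_1|=r\}\times K$, which at that stage of the argument is available only for $K\subset\omega$; a merely pointwise limit of the (analytic) Riemann sums need not be analytic, so the claim is not justified as stated. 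This is precisely why the textbook proofs orient the expansion the other way: they expand in powers of the variable for which only the small disc $\omega$ of joint analyticity is known, so that the coefficient contour lies inside the region where $u$ is already jointly analytic, and the subharmonic functions $v_k$ then live on the variable ranging over the full disc. With that reorientation (or a separate argument for the analyticity of your $a_k$ on all of $\Delta_2$) your plan goes through exactly as in \cite{hormander}.
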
\noindent (Proof can be found in \eg Chapter II of \cite{hormander}. Lemma \ref{lemma_Hartogs_theorem} is also known as Osgood's Lemma when, in addition, the functions are assumed to be continuous.)

It then follows from \eqref{eq_Jacobi_theta_functions}, and \eqref{eq_ns_lame_polynomial_expansion}, that the eigenfunctions $\psi_{n,g}(y)$ in \eqref{eq_ns_lame_eigenfunctions} are analytic functions of $y$ and $\tau$, in the integration domain if $g$ is an integer and the series in \eqref{eq_ns_lame_polynomial_expansion} is absolutely convergent. (Note that for non-integer values of $g$ (and complex $y$) there exists branch cuts along the real axis.) 

We are then left to show that the kernel function is analytic w.r.t. three variables:
\begin{lemma}\label{lemma_KF_analytic}
For $g,\A$ positive integers, the function $\kk(x,y)$ in \eqref{eq_kernel_function} is an analytic function in the region $(x,y,\tau)\in\C^{3}$ where
\begin{equation}
\Im(\tau)>0 , \quad |\Im(x)| < \pi \Im(\tau) , \quad | \Im(x)| < \Im(y) < 2\pi \Im(\tau) - | \Im(x)|.
\label{eq_analytic_region}
\end{equation}
\end{lemma}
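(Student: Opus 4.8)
The plan is to show that the only possible singularities of $\kk(x,y)$ come from the denominator $\bigl(\theta_1(\half(x+y))\,\theta_1(\half(x-y))\bigr)^{2g+\A}$, and then to verify that within the region \eqref{eq_analytic_region} neither $\theta_1(\half(x+y))$ nor $\theta_1(\half(x-y))$ vanishes. First I would recall from \eqref{eq_Jacobi_theta_functions} that $\theta_1$, $\theta_2$, $\theta_3$, $\theta_4$ are entire functions of their argument (for $\Im(\tau)>0$ the infinite products converge uniformly on compacts, and $G$ is a nonzero analytic function of $q$), so the numerator factors $\theta_1(x)^{g+\A}$ and $\theta_1(y)^{g}$ are entire in $x$, $y$ and analytic in $\tau$; the prefactor $(2q^{\quarter})^{-(2g+\A)}\vartheta_1'(0)^{2g+\A}$ is an analytic, nonvanishing function of $\tau$ on $\Im(\tau)>0$. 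Since $g$ and $\A$ are positive integers, all exponents are integers, so no branch-cut issues arise and $\kk(x,y)$ is analytic in each variable separately wherever the denominator is nonzero; Hartogs' Theorem (Lemma \ref{lemma_Hartogs_theorem}) then upgrades this to joint analyticity on that open set.

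The main work is therefore the zero-locus analysis of $\theta_1$. I would use the standard fact that $\theta_1(z)=0$ if and only if $z \in \pi\Z + \pi\tau\Z$ (equivalently, with $\wone\equiv\pi$, the zeros are at lattice points of the periods $(\pi,\pi\tau)$); this can be read off from the product in \eqref{eq_Jacobi_theta_functions} together with the quasi-periodicity of $\theta_1$. Hence $\theta_1(\half(x\pm y))=0$ forces $\half(x\pm y) = \pi m + \pi\tau \ell$ for integers $m,\ell$, i.e. $\Im(x)\pm\Im(y) = 2\pi\ell\,\Im(\tau)$. Taking imaginary parts and using the hypotheses $|\Im(x)|<\pi\Im(\tau)$ and $|\Im(x)|<\Im(y)<2\pi\Im(\tau)-|\Im(x)|$, I would check that $\Im(x)+\Im(y)$ lies strictly between $0$ and $2\pi\Im(\tau)$ — so it cannot equal $2\pi\ell\,\Im(\tau)$ for any integer $\ell$ — and likewise that $\Im(y)-\Im(x)$ lies strictly between $0$ and $2\pi\Im(\tau)$, and $\Im(x)-\Im(y)$ strictly between $-2\pi\Im(\tau)$ and $0$; in all cases no integer multiple of $2\pi\Im(\tau)$ is attained. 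Therefore the denominator never vanishes on the region \eqref{eq_analytic_region}.

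Putting these together: on the open set \eqref{eq_analytic_region} the denominator is nonvanishing and analytic, the numerator and $\tau$-dependent prefactor are analytic, all exponents are integers, so $\kk(x,y;g,\A)$ is analytic separately in $x$, in $y$ and in $\tau$ throughout this set, and Lemma \ref{lemma_Hartogs_theorem} gives joint analyticity in $(x,y,\tau)$. The one step that needs the most care is the chain of imaginary-part inequalities ruling out the lattice zeros of $\theta_1(\half(x\pm y))$: the constraints in \eqref{eq_analytic_region} are tuned precisely so that the relevant imaginary combinations stay inside one fundamental strip, and this is exactly what the "ordering of the $\varepsilon_j$" footnote later exploits when iterating the transform. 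Everything else is the standard transfer of analyticity through products, quotients and the Hartogs lemma.
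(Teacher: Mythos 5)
Your proposal is correct and follows essentially the same route as the paper: the numerator is entire because the exponents are integers, the only possible singularities come from the zeros of $\theta_{1}(\half(x\pm y))$ at the lattice points $\pi n+\pi m\tau$, and the imaginary-part inequalities in \eqref{eq_analytic_region} exclude exactly those points. Your version is in fact slightly more complete than the paper's, which asserts without computation that the pole locus $x\pm y+2\pi n+2\pi m\tau=0$ misses the region, whereas you carry out the chain of inequalities explicitly and also make the appeal to Lemma \ref{lemma_Hartogs_theorem} explicit.
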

\begin{proof}
From the definition of the $\theta_{1}$-function in \eqref{eq_Jacobi_theta_functions}, it is follows from a straightforward check that the numerator is analytic for integer couplings $(\A,g)$. (Note that the numerator is analytic for any couplings parameters if $x,y$ are real.) It is also known\footnote{Interested readers can check Section XXI of \cite{WhitWat}.} that the $\theta_{1}$-function has simple zeroes at the point $ \pi n + \pi m \tau $, $n,m\in\Z$. The kernel function $\kk(x,y)$ in \eqref{eq_kernel_function} is thus a meromorphic function with poles of order $2g+\A$ at the points
\begin{equation}
x \pm y + 2 \pi n + 2\pi m \tau = 0 \quad (n,m\in\Z)
\label{eq_singular_points}
\end{equation}
which are not included in the region \eqref{eq_analytic_region} above. 
\end{proof}
The integral transform preserves analyticity of the function and the iterated integral representations in \eqref{eq_recursive_integral_solutions} will be an analytic function if the seed function is analytic in the strip $\{z \in\C : |\Im(z)| < \pi \Im(\tau)\}$ (recall that $ \Im(\tau) >0$).

Next we prove that the integral transform is of the form \eqref{eq_ns_lame_eigenfunctions}-\eqref{eq_ns_lame_polynomial_expansion}, and in order to simplify terminology, as we proceed, we define the space $\mathcal{F}_{g}$ as the vector space spanned by the functions 
\begin{equation}
(\theta_{1}(x)^{2})^{\half g} \mathcal{P}(\cos(x))
\end{equation}
where $\mathcal{P}(z)$ an analytic function of  $z$ (in an open domain depending on $|q|$ which includes $[-1,1]$) and $q$ (for $| q | < 1$)  with expansion $\mathcal{P}(z)= \sum_{\ell\in\N_{0}} \mathcal{P}^{(\ell)}(z)q^{2\ell}$, where $\mathcal{P}^{(\ell)}(z)$ is a polynomial with $\text{deg}(\mathcal{P}^{(\ell)}) \leq b + 2\ell$ for some $b>0$.

\begin{lemma}\label{lemma_space_endomorphism}
Let $\psi_{n,g}\in\mathcal{F}_{g}$ then $(K_{g+\A,g}\psi_{n,g})\in\mathcal{F}_{g+\A}$.
\end{lemma}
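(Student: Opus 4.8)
\textbf{Proof plan for Lemma \ref{lemma_space_endomorphism}.}
The plan is to compute the integral transform \eqref{eq_integral_transform_1} by contour deformation and residue calculus, exploiting the fact that for integer $g$ and integer $\A$ the only obstruction to analyticity inside the relevant strip are the poles of the kernel function located at $x\pm y\in 2\pi\Z+2\pi\tau\Z$. First I would substitute $\psi_{n,g}(y)=(\theta_{1}(y)^{2})^{\half g}\mathcal{P}_{n,g}(\cos y)$ into \eqref{eq_integral_transform_1} and observe that, after pulling out the $x$-dependent prefactor $(\theta_{1}(x)^{2})^{\half(g+\A)}$, the remaining integrand is a $2\pi$-periodic meromorphic function of $y$ whose only poles in the strip $0<\Im(y)<2\pi\Im(\tau)$ (for $\Im(x)=0$) sit at $y=\pm x+2\pi\tau$ (mod $2\pi$), of order $2g+\A$. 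The strategy is to shift the contour $\mathcal{C}_\varepsilon$ upward across these two poles to a contour at height just below $2\pi\Im(\tau)$; by periodicity in $y$ with quasi-period $2\pi\tau$ inherited from the $\theta_{1}$'s (here one uses that $2g+\A$ is an integer, so the quasi-periodicity factors are single-valued), the integral over the upper contour is related back to the original one, producing an equation of the form (original integral) $=$ (a rational multiple of the same integral) $+$ (sum of residues at $y=\pm x+2\pi\tau$). This expresses the transform, up to an overall $q$-dependent constant, purely in terms of those two residues.

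The second step is to extract the structure of these residues. Near $y=x+2\pi\tau$ the factor $\theta_{1}(\half(x-y))^{-(2g+\A)}$ produces the pole; expanding $\theta_{1}$, the $\theta_{1}(y)$'s, $\psi_{n,g}(y)$, and $\theta_{1}(\half(x+y))^{-(2g+\A)}$ in Taylor series about that point and taking the coefficient of the $(2g+\A-1)$-th derivative, one gets a finite linear combination of derivatives of $\psi_{n,g}$ and of elliptic functions evaluated at $y=x+2\pi\tau$. Using the quasi-periodicity of all these pieces, everything can be re-expressed at $y=x$, and one finds that the result is $(\theta_{1}(x)^{2})^{-\half(g+\A)}$ (canceling part of the prefactor) times a polynomial in $\cos x$ and $q$ whose $x$-dependence comes from derivatives $\mathcal{P}_{n,g}^{(j)}(\cos x)$, products with $\sin x$, and coefficients that are power series in $q^{2}$. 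The residue at $y=-x+2\pi\tau$ gives the parity-reflected contribution, and adding the two symmetrizes the answer in $x\mapsto -x$, so that only even combinations survive; combined with $(\theta_{1}(x)^{2})^{\half(g+\A)}$ this lands in the claimed form $(\theta_{1}(x)^{2})^{\half(g+\A)}\mathcal{P}(\cos x)$.

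The third step is the bookkeeping of degrees and of the $q$-expansion. One checks that in the limit $q\to 0$ the kernel degenerates to the trigonometric kernel $(\sin x)^{g+\A}(\sin y)^{g}/(\sin\half(x+y)\sin\half(x-y))^{2g+\A}$, for which the integral transform is the classical one mapping Gegenbauer polynomials $C_{n}^{(g)}$ to $C_{n-\A}^{(g+\A)}$ (this is where the normalization $\mathcal{N}_{n,g,\A}$ in \eqref{eq_norm_constant_1} is pinned down, so that the leading coefficient is exactly $1$); hence the $q^{0}$ term of $\mathcal{P}$ is $C_{n-\A}^{(g+\A)}(\cos x)$, a polynomial of degree $n-\A$. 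For the higher $q^{2\ell}$ terms, since $\psi_{n,g}\in\mathcal{F}_{g}$ has $\mathcal{P}_{n,g}^{(\ell)}$ of degree $\le n+2\ell$, and the residue operation applied to the kernel expanded in $q$ raises the polynomial degree in a controlled way (each extra power $q^{2}$ in the kernel contributes boundedly, by the explicit $q$-expansions of the $\theta$'s and of $\eta_{1}$), one gets $\deg\mathcal{P}^{(\ell)}\le (n-\A)+2\ell$ by an induction on $\ell$. Convergence of the resulting series in $q^{2}$ follows from the uniform bounds on the Taylor coefficients of the kernel in the analytic region \eqref{eq_analytic_region} established in Lemma \ref{lemma_KF_analytic}.

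The main obstacle I expect is the second step: organizing the high-order residue at a pole of order $2g+\A$ so that it is manifestly of the form ``even polynomial in $\cos x$ times power series in $q^{2}$'' with the correct degree bound, rather than producing spurious $\sin x$ factors or fractional powers of $\theta_{1}(x)^{2}$. The cleanest way around this is probably not to compute the residue head-on but to argue structurally: the function $(K_{g+\A,g}\psi_{n,g})(x)$ is, by Lemma \ref{lemma_KF_analytic} and the boundary-term vanishing already established, analytic and $2\pi$-periodic and even in $x$; one shows $(\theta_{1}(x)^{2})^{-\half(g+\A)}(K_{g+\A,g}\psi_{n,g})(x)$ is analytic in $\cos x$ by checking it has no singularities where $\theta_{1}(x)=0$ (the kernel's prefactor exactly cancels the potential branch point, using integrality of $g+\A$), and then uses the $q\to 0$ degree together with the controlled $q$-dependence to conclude the polynomial/series structure. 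The normalization computation, while routine, also requires care because it involves the precise Gegenbauer integral identity; I would relegate the verification of \eqref{eq_norm_constant_1} to a direct evaluation at $q=0$.
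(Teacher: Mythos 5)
Your plan diverges from the paper's proof in an essential way, and the step on which everything else rests does not go through. The paper does not use residues at all: it writes the integrand in the variable $\xi=\e^{\imag y}$, expands the kernel factor $\Theta_{1}(\xi^{2})^{2g}/\Theta(z,\xi)^{2g+\A}$ as an absolutely convergent Laurent series $\sum_{n\in\Z}f_{n}(z)\xi^{n}$ on the annulus $|q|^{2}<|\xi|<1$ via binomial expansions of every factor of the infinite products, and then the $y$-integral simply extracts the $\xi^{0}$ Fourier coefficient; the degree bound $\deg f_{n}^{(\ell)}\leq n+2\ell$ comes from explicit bookkeeping of the multi-indices $(\mu_{k}^{(j)},m_{k}^{(j)})$ in those expansions (the inequality \eqref{eqproof_lemma_4}). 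Your first step instead shifts $\mathcal{C}_{\varepsilon}$ up by $2\pi\tau$ and claims the upper integral is ``related back to the original one'' by quasi-periodicity. That relation fails: the integrand is not just the kernel but $\kk(x,y)\psi_{n,g}(y)$, and for a general element of $\mathcal{F}_{g}$ the factor $\mathcal{P}_{n,g}(\cos y)$ has no quasi-periodicity under $y\mapsto y+2\pi\tau$ (note $\cos(y+2\pi\tau)=\half(q^{2}\e^{\imag y}+q^{-2}\e^{-\imag y})$ is not related to $\cos y$ by any multiplier, and $\mathcal{F}_{g}$ only guarantees analyticity of $\mathcal{P}$ on a domain containing $[-1,1]$, which need not even reach the values of $\cos y$ on the upper contour). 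The paper itself only \emph{conjectures} the quasi-periodicity $\psi_{n,g}(x+2\pi\tau)=\e^{2\imag\A(x+\pi\tau)}\psi_{n,g}(x)$ for its particular solutions and explicitly declines to prove it, so it cannot be an input here. Without that relation, your contour shift produces an unrelated integral plus residues, not a closed equation, and the reduction to two residues collapses.

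The second genuine gap is that the actual content of the lemma --- that the coefficient of $q^{2\ell}$ after integration is a polynomial in $\cos x$ of degree at most $(n-\A)+2\ell$, with no spurious odd or negative contributions --- is asserted in your third step by appeal to ``controlled $q$-dependence'' and an unspecified induction, rather than derived. This is exactly where the paper spends its effort: the inequality $m+N\leq n+2L$ among the exponents of $z$, $q$, and $\xi$ in the expanded kernel is what forces the degree bound, and it is not automatic (one must check that every extra power of $z$ is accompanied by at least the corresponding power of $q^{2}$ across all four families of factors). Your $q=0$ computation and the identification of the leading term with $C_{n-\A}^{(g+\A)}$ agree with the paper's normalization check, but that only controls the $\ell=0$ term. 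To repair the argument you would either have to carry out the Laurent/Fourier expansion as the paper does, or make the residue computation at the order-$(2g+\A)$ poles fully explicit together with a proof of the required quasi-periodic closure --- the latter of which is not available for arbitrary $\psi_{n,g}\in\mathcal{F}_{g}$.
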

Most of the remainder of this Section will be devoted to the proof of Lemma \ref{lemma_space_endomorphism}. The normalization constant in \eqref{eq_norm_constant_1} is then explained at the end of this Section.

\begin{proof}[Proof of Lemma \ref{lemma_space_endomorphism}]
We start by re-expressing the integrand as
\begin{equation}\begin{split}
\kk(x,y) \psi_{n,g}(y) &= ( \theta_{1}(x)^{2} )^{\half(g+\A)} \frac{ (\theta_{1}(y)^{2})^{g}}{\Bigl( \theta_{1}(\half( x +y )) \theta_{1}(\half(x-y)) \Bigr)^{2g+\A} } \mathcal{P}_{n,g}(\cos(y)) \\
&=  ( \theta_{1}(x)^{2})^{\half (g+\A)} \e^{\imag \A y} \frac{\Theta_{1}(\e^{2\imag y})^{2 g }}{G^{2(g-\A)} \Theta(\cos(x), \e^{\imag y})^{2g+\A}} \mathcal{P}(\half( \e^{\imag y} + \e^{-\imag y})).
\label{eqproof_integrand_expansion}
\end{split}
\end{equation}
We start by showing that
\begin{equation}
 \frac{\Theta_{1}(\xi^{2})^{2g}}{\Theta(z, \xi)^{2g+\A}} = \sum\limits_{n\in\Z} f_{n}(z) \xi^{n},
\label{eq_Theta_kernel}
\end{equation}
where the sum is absolutely convergent for $| q |^{2} < |\xi| < 1$ (recall that we set $\Im(x)=0$) and that the function $f_{n}(z)$ has the expansion
\begin{equation}
f_{n}(z) = \sum_{\ell=0}^{\infty} f_{n}^{(\ell)}(z) q^{2\ell},
\label{eq_f_n_expansion}
\end{equation}
where $f_{n}^{(\ell)}(z)$ is a polynomial of order $n+2\ell$ for $n+2\ell \geq 0$.

We first consider this in the case $q=0$. Using the definitions in Section \ref{section_notation}, with $q=0$ for \eqref{eq_Theta_kernel}, yields
\begin{equation}
\left. \frac{\Theta_{1}(\xi^{2})^{2g}}{\Theta(z, \xi)^{2g+\A}} \right|_{q=0} = \frac{(1-\xi^{2})^{2g}}{(1-2z\xi + \xi^{2})^{2g+\A}} = \sum_{n \in \N_{0}} f_{n}^{(0)}(z) \xi^{n}.
\end{equation}
Using the binomial series expansion yields
\begin{align}
 \frac{(1-\xi^{2})^{2g}}{(1-2z\xi + \xi^{2})^{2g+\A}}  &= \sum_{n_{1} , n_{2} \in \N_{0}} \binom{2 g}{n_{2}} ( - \xi^{2})^{n_{2}} \binom{-(2g+\A)}{n_{1}} \xi^{n_{1}}( \xi-2z)^{n_{1}} \nonumber \\
&=\sum_{n_{1},n_{2}\in\N_{0}} \sum_{m_{1}=0}^{n_{1}} \binom{2 g}{n_{2}}\binom{-(2g+\A)}{n_{1}} (-1)^{n_{2}} (-2z)^{m_{1}} \xi^{2n_{1} + 2n_{2} - m_{1}} \nonumber \\
&= \sum\limits_{n\in \N_{0}} f_{n}^{(0)}(z) \xi^{n}.
\label{eqproof_binomial_expansion}
\end{align}

Now consider the case $q\neq 0$. It follows from the calculations above and the definitions of the $\Theta_{1}$ and $\Theta$ in Section \ref{section_notation}, that 
\begin{equation}
\frac{\Theta_{1}(\e^{2\imag y})^{2g}}{ \Theta(\cos(x), \e^{\imag y})^{2g+\A}} = \sum_{m \in \N_{0}} f_{m}^{(0)}(z) \xi^{m} \prod\limits_{ k \in \N} \frac{[ ( 1 - q^{2k}\xi^{2})(1-q^{2k}\xi^{-2})]^{2g}}{[(1- 2z q^{2k} \xi  + q^{4k} \xi^{2})(1-2z q^{2k} \xi^{-1} + q^{4k} \xi^{-2})]^{2g+\A}} 
\label{eqproof_lemma_1}
\end{equation}
Expanding all the factors in terms of binomial series yields that the product, on the r.h.s. of \eqref{eqproof_lemma_1}, is given by linear combinations of terms
\begin{equation}
f_{m}^{(0)}(z) \xi^{m} \prod_{k\in\N} ( q^{2k}\xi^{2})^{\mu_{k}^{(1)}}(q^{2k}\xi^{-2})^{\mu_{k}^{(2)}} [ q^{2k} \xi ( q^{2k} \xi - 2 z)]^{\mu_{k}^{(3)}} [ q^{2k}\xi^{-1}(q^{2k}\xi^{-1} - 2 z)]^{\mu_{k}^{(4)}},
\end{equation}
with non-negative integers $m$ and $\mu_{k}^{(l)}$ (l=1,2,3,4). Further expansion yields linear combinations of terms
\begin{equation}
f_{m}^{(0)}(z)\xi^{m} \prod\limits_{k\in\N}  ( q^{2k}\xi^{2})^{\mu_{k}^{(1)}}(q^{2k}\xi^{-2})^{\mu_{k}^{(2)}} (q^{2k}\xi)^{2\mu_{k}^{(3)} - m_{k}^{(3)}} z^{m_{k}^{(3)}} ( q^{2k} \xi^{-1})^{2\mu_{k}^{(4)} - m_{k}^{(4)}}
\label{eqproof_lemma_2}
\end{equation}
with $m_{k}^{(3)} \leq \mu_{k}^{(3)}$ and $m_{k}^{(4)} \leq \mu_{k}^{(4)}$, for all $k\in\N$.
It is clear that \eqref{eqproof_lemma_2} can be written as 
\begin{equation}
 f_{m}^{(0)}(z) z^{N} q^{2L} \xi^{m+M}
\label{eqproof_lemma_3}
\end{equation} with
\begin{equation}
\begin{split}
 M &= \sum_{k\in\N} 2 \mu_{k}^{(1)} - 2\mu_{k}^{(2)} + 2 \mu_{k}^{(3)} - m_{k}^{(3)} - 2\mu_{k}^{(4)} + m_{k}^{(4)},\quad N = \sum_{k\in\N}m_{k}^{(3)}+m_{k}^{(4)} \\
L &= \sum_{k\in\N} k ( \mu_{k}^{(1)} + \mu_{k}^{(2)} + (2\mu_{k}^{(3)} - m_{k}^{(3)}) + ( 2 \mu_{k}^{(4)} - m_{k}^{(4)})),
\end{split}
\end{equation}
where $M\in \Z$ and $N,L\in\N$ (recall that $\mu_{k}^{(l)} \in \N \  \forall k,l \text{ and } 0 \leq m_{k}^{(3)} \leq \mu_{k}^{(3)} ,  \ 0\leq  m_{k}^{(4)} \leq \mu_{k}^{(4)}$).

It follows that the function $f_{n}(z)$ in \eqref{eq_Theta_kernel} is given as linear combinations of the terms in \eqref{eqproof_lemma_3}, with $n = m+ M$, and that it has the expansion in \eqref{eq_f_n_expansion}. We are left to consider the polynomial order: it follows that $m+N \leq n + 2 L$, where $n = m+M$, since 
\begin{equation}
n + 2 L - m - N = M + 2L - N \geq 2 \sum_{k\in\N} ( k + 1)\mu_{k}^{(1)} + (k-1) \mu_{k}^{(2)} + k \mu_{k}^{(3)} +(k-1)\mu_{k}^{(4)} \geq 0
\label{eqproof_lemma_4}
\end{equation}
for non-negative integers $\mu_{k}^{(j)}$ ($j=1,2,3,4$), and $f_{n}^{(\ell)}(z)$, for fixed $n,\ell\in\N_{0}$, is therefore a polynomial of order $n + 2\ell$.

We can then also expand the function $\mathcal{P}_{n,g}(\half( \xi + \xi^{-1}))$ as linear combinations of terms $$ q^{2 \ell'} \xi^{\nu_{1} - 2 \nu_{2}} $$ with non-negative integers $\nu_{1},\nu_{2},\ell'$ satisfying $\nu_{2} \leq \nu_{1} \leq n + 2\ell'$. Since the constant $b$ in Lemma \ref{lemma_space_endomorphism} is arbitrary we can denote the constant by $n$ without loss of generality. The integral transform in \eqref{eq_integral_transform_1} is then given by linear combination of terms
\begin{equation}
(\theta_{1}(x)^{2})^{\half(g+\A)} \int\limits_{-\pi + \imag \varepsilon}^{\pi + \imag \varepsilon} \frac{\d y}{2\pi} \e^{\imag \A y} f_{m}^{(\ell)}(z) q^{2 ( \ell + \ell' )} \e^{\imag (m + \nu_{1} - 2 \nu_{2}) y}
\label{eqproof_lemma_5}\end{equation}
with $m\in \Z, \ell , \ell ' \in \N_{0}$, and $\nu_{1} , \nu_{2}$ satisfying $0 \leq \nu_{2} \leq \nu_{2} \leq n + 2\ell'$ (the interchange of summation and integration is justified since the series is absolutely convergent). By Cauchy's Theorem we obtain that the integral in \eqref{eqproof_lemma_5} is independent of $\varepsilon$, and given by linear combinations of terms $f_{2m- m' - \A}^{(\ell - \ell')}q^{2\ell}$ with non-negative integers satisfying $0 \leq m \leq m' \leq n+2\ell'$ and $0 \leq \ell' \leq \ell$. From the properties of the $f_{n}(z)$ functions it follows that, for fixed $\ell \in \N_{0}$, then the linear combinations of terms $f_{2m- m' - \A}^{(\ell - \ell')}$ is a polynomial of order $n-\A+2\ell$.

\end{proof}

A straightforward check of \eqref{eqproof_binomial_expansion} and \eqref{eq_Gegenbauer_series_rep} also reveals that
\begin{equation}
\int_{\mathcal{C}_{\varepsilon}} \frac{\d y}{2\pi} \frac{(1-\e^{2\imag y})^{2g} \e^{\imag \A y} C_{n}^{(g)}(\cos(y))}{(1 - 2 \cos(x) \e^{\imag y} + \e^{2\imag y})^{2g+\A}} =  2^{-2(g+\A)} \frac{(g)_{n}(2g+\A)_{n-\A}}{n! (g+\A)_{n-\A}} C_{n-\A}^{(g+\A)}(\cos(x)),
\end{equation}
and the constant in \eqref{eq_norm_constant_1} ensures that \eqref{eq_recursive_integral_relation_1} the expansion in \eqref{eq_ns_lame_polynomial_expansion}.

\section{Arbitrary couplings}\label{section_arbitrary_couplings}

For non-integer values of the parameter $\A$, the integral transform in \eqref{eq_integral_transform_1} requires another suitable choice of integration path $\mathcal{C}$. (It is clear from \eqref{eqproof_integrand_expansion} that the more elementary path $\mathcal{C}_{\varepsilon}$ in \eqref{eq_integral_transform_1} only works for integer parameter $\A$.) It has been suggested (see \eg \cite{etingof1994}) that a suitable integration contour would be the Pochhammer loop, in the $\e^{\imag y}$ plane, around the branch points. {Another example of a suitable integration contour can be found in \cite{Langmann_Takemura} for a more general case.} 

In this section we consider the case where the coupling parameter $\A$ can take an arbitrary value and make use of a different integral transform operator. The integral operator used in this Section has the advantage of working for any arbitrary coupling parameter, but the drawback is that the resulting integrals are somewhat more complicated.

\subsection{Results}
We define the integral transform operator $\mathfrak{K}_{g+\A,g}$ as follows: Let $\psi_{n,g}(x)$ be of the form \eqref{eq_ns_lame_eigenfunctions}, with $\mathcal{P}_{n,g}$ in \eqref{eq_ns_lame_polynomial_expansion}. Then
\begin{equation}
(\mathfrak{K}_{g+\A,g}\psi_{n,g})(x) \equiv  q^{-n} \mathcal{M}_{n,g,\A} \int_{-\pi}^{\pi} \frac{\d y}{2\pi} \frac{(\theta_{1}(x)^{2})^{\half( g+\A)} (\theta_{1}(y)^{2})^{\half g}}{( \theta_{4}(\half(x+y))\theta_{4}(\half(x-y)))^{2 g+\A}} \psi_{n,g}(y)
\label{eq_integral_transform_4}
\end{equation}
and normalization constant $\mathcal{M}_{n,g,\A}$, given by
\begin{equation}
\mathcal{M}_{n,g,\A} \equiv  \frac{ (g)_{n} (g+\A)_{n}}{n! ( 2 g + \A)_{n} \mathfrak{h}_{n}} = \frac{{2}^{2g-1}\Gamma ( n+g+1 ) \Gamma( g ) \Gamma( n+ g+\A) \Gamma( 2g + \A) }{\Gamma( g+\A ) \Gamma( n+2 g ) \Gamma( n + 2g+\A) } \label{eq_norm_constant_4}
\end{equation}
with $\mathfrak{h}_{n}$ in \eqref{eq_Gegenbauer_norm_1}, that ensures that the integral transform preserves the expansion in \eqref{eq_ns_lame_polynomial_expansion}.

(Recall the definition of $\theta_{\nu}$-functions in \eqref{eq_Jacobi_theta_functions} and the Lam\ee differential operator in \eqref{eq_Lame_operator}.)

\begin{theorem}\label{theorem_2}
Let $\A \in \R , g\geq 0$, such that $2g+\A>0$, and $n\in\Z$ be fixed. For $\psi_{n,g}$, as in \eqref{eq_ns_lame_eigenfunctions}-\eqref{eq_ns_lame_polynomial_expansion}, 
a solution of \eqref{eq_non_stationary_lame_defining} for parameters $(\A,g)$, with $E = E_{n,g}$ in \eqref{eq_energy_eigenvalues}, then the function
\begin{equation}
( \mathfrak{K}_{g+\A,g} \psi_{n,g})(x),
\label{eq_theorem_integral_transform_4}
\end{equation} 
with $\mathfrak{K}_{g+\A,g}$ in \eqref{eq_integral_transform_4}, is a solution of the non-stationary Lam\ee equation \eqref{eq_non_stationary_lame_defining} for parameters $(\A , g+ \A)$ and $E = E_{n,g+\A}$. Moreover, the function in \eqref{eq_theorem_integral_transform_4} is an analytic function of $x$ and $\tau$ \emph{(} $\Im(\tau)>0$\emph{)} and of the form \eqref{eq_ns_lame_eigenfunctions}-\eqref{eq_ns_lame_polynomial_expansion} for parameters $(\A, g+ \A)$ and with degree $n$.
\end{theorem}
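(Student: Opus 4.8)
The plan is to follow the same three-step architecture used in the proof of Theorem \ref{theorem_1}, replacing the kernel function $\kk(x,y;g,\A)$ built from $\theta_1$ by the one built from $\theta_4$, and replacing the periodicity/contour argument by one adapted to the closed contour $[-\pi,\pi]$. First I would establish the \emph{generalized kernel function identity} for the modified kernel
\begin{equation*}
\kk_4(x,y;g,\A) \equiv \frac{\theta_{1}(x)^{g+\A}\theta_{1}(y)^{g}}{\bigl( \theta_{4}(\half(x+y))\,\theta_{4}(\half(x-y))\bigr)^{2g+\A}},
\end{equation*}
i.e.\ that $\bigl( \tfrac{2\imag}{\pi}\A\,\partial_\tau + H_{\L}(x;g+\A) - H_{\L}(y;g) - C_4\bigr)\kk_4(x,y) = 0$ for an explicit constant $C_4$; I expect this to again be a specialization of Corollary 3.2 in \cite{Langmann_Takemura}, now with the $\theta_4$-type kernel obtained by shifting one argument by a half-period, which only changes the constant $C_4$ (and accounts for the extra $6g^2(\eta_1/\pi - 1/12)$-type shift matching $E_{n,g}\mapsto E_{n,g+\A}$ in \eqref{eq_energy_eigenvalues}). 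Granting this, the formal computation is identical to the one in Section \ref{section_proof_of_theorem_1}: apply $\tfrac{2\imag}{\pi}\A\,\partial_\tau + H_{\L}(x;g+\A)$ under the integral sign, use the kernel identity to trade $H_{\L}(x;g+\A)$ for $H_{\L}(y;g)$, integrate by parts twice in $y$, and invoke the hypothesis that $\psi_{n,g}$ solves \eqref{eq_non_stationary_lame_defining} with $E=E_{n,g}$, so that the $\partial_\tau(\mathcal{M}_{n,g,\A}q^{-n}G^{\cdots})$ correction terms assemble into exactly $E_{n,g+\A}$.

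The key structural difference from Theorem \ref{theorem_1} is that here the contour is the honest segment $[-\pi,\pi]$ with no $\imag\varepsilon$ shift, because — as the paper notes just before Section \ref{section_arbitrary_couplings} — for non-integer $\A$ the $\theta_1$-kernel has branch cuts that the shifted straight line cannot avoid, whereas $\theta_4(\half(x\pm y))$ has no zeros on $y\in[-\pi,\pi]$ when $x\in[-\pi,\pi]$ (its zeros sit at $\pi(2m+1) + \pi\tau\ell$, off the real axis for $\Im(\tau)>0$). So I would first verify that the integrand is continuous and that the boundary terms from the two integrations by parts vanish: both $\kk_4(x,\cdot)$ and $\psi_{n,g}(\cdot)$ are $2\pi$-periodic (the factor $(\theta_1(y)^2)^{g/2}$ is the even power, and $\mathcal P_{n,g}(\cos y)$ is manifestly $2\pi$-periodic), so the endpoint contributions at $y=\pm\pi$ cancel pairwise. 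For analyticity in $(x,\tau)$ I would argue as in Lemma \ref{lemma_KF_analytic}: $\theta_4(\half(x\pm y))^{-(2g+\A)}$ is analytic on a neighborhood of $\{(x,y,\tau): x,y\in[-\pi,\pi],\ \Im(\tau)>0\}$ since the $\theta_4$-zeros are bounded away from that set, and $(\theta_1(x)^2)^{(g+\A)/2}$ is analytic there (the even square kills the branch issue in $x$); then Hartogs (Lemma \ref{lemma_Hartogs_theorem}) gives joint analyticity, which also legitimizes interchanging $\partial_x,\partial_\tau$ with the integral.

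The remaining step is the analogue of Lemma \ref{lemma_space_endomorphism}: showing $(\mathfrak K_{g+\A,g}\psi_{n,g})\in\mathcal F_{g+\A}$ with degree $n$ (not $n-\A$, since the shift is $0$ here) and pinning down $\mathcal M_{n,g,\A}$. I would substitute $z=\cos x$, $\xi=\e^{\imag y}$, expand $\theta_4(\half(x\pm y))^{-(2g+\A)}$ and $\theta_1(y)^{2g}$ as absolutely convergent series in $\xi$ and $q$ using the product forms in \eqref{eq_Jacobi_theta_functions} (the $q=0$ piece of the $\theta_4$-kernel is essentially constant, which is why the degree is preserved rather than lowered), expand $\mathcal P_{n,g}(\tfrac12(\xi+\xi^{-1}))$ in powers of $\xi^{\pm1}$ and $q$, and read off the $\int_{-\pi}^\pi \frac{dy}{2\pi}$ as the constant-$\xi$-coefficient — the $q^{-n}$ prefactor compensates the lowest power of $q$ that survives after this projection, producing an honest $\mathcal F_{g+\A}$-element. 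The polynomial-degree bookkeeping mirrors \eqref{eqproof_lemma_2}--\eqref{eqproof_lemma_4}. To fix the normalization I would compute the leading ($q\to0$) action on the Gegenbauer term using the $q=0$ integral formula for the $\theta_4$-kernel against $C_n^{(g)}(\cos y)$ together with the Gegenbauer norm $\mathfrak h_n$ of \eqref{eq_Gegenbauer_norm_1}, obtaining the factor $(g)_n(g+\A)_n / \bigl(n!\,(2g+\A)_n\,\mathfrak h_n\bigr)$ quoted in \eqref{eq_norm_constant_4}, so that the coefficient of $C_n^{(g+\A)}$ in the image is $1$.

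I expect the main obstacle to be the $q=0$ evaluation of the $\theta_4$-kernel integral against a Gegenbauer polynomial (the analogue of the clean $\theta_1$-identity displayed right after Lemma \ref{lemma_space_endomorphism}'s proof): unlike the $\theta_1$ case, the $q=0$ limit of $\theta_4(\half(x\pm y))$ is not a simple binomial factor $(1-2z\xi+\xi^2)$, so the orthogonality computation producing $\mathcal M_{n,g,\A}$ — and the verification that the image is genuinely degree $n$ with the stated leading coefficient — requires more care than the integer-coupling argument. A secondary technical point is convergence of the double series (in $\xi^{\pm1}$ and $q$) uniformly enough on $y\in[-\pi,\pi]$ to justify term-by-term integration; this should follow from $|q|<1$ and compactness of the contour, as in the proof of Lemma \ref{lemma_space_endomorphism}.
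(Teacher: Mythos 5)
Your overall architecture matches the paper's: the $\theta_{4}$-kernel identity obtained from the $\mu=1$ case by the half-period shift $y\to y+\pi\tau$ (which conjugates away a first-order term and produces the extra constant $\A(2g+\A)$ in $C_{4}$), the formal computation with two integrations by parts whose boundary terms vanish by $2\pi$-periodicity on $[-\pi,\pi]$, the $2\A n$ contribution from $\partial_{\tau}q^{-n}$ that upgrades $(n+g)^{2}$ to $(n+g+\A)^{2}$, analyticity from the $\theta_{4}$-zeros lying off the real axis, and the Gegenbauer normalization fixing $\mathcal{M}_{n,g,\A}$. (One small slip: there is no $G^{\cdots}$ factor in $\mathfrak{K}_{g+\A,g}$, only $q^{-n}\mathcal{M}_{n,g,\A}$; the mechanism you describe is nevertheless the right one.) Your direct Hartogs argument for analyticity is a reasonable substitute for the paper's $L^{2}$-bound in Lemma \ref{lemma_normalizable}.

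The genuine gap is in your final step. You propose to expand the integrand in $\xi=\e^{\imag y}$ and ``read off the constant-$\xi$-coefficient,'' mirroring \eqref{eqproof_lemma_2}--\eqref{eqproof_lemma_4}. That power-counting argument relies on all exponents $2g$ and $2g+\A$ being integers, so that the binomial expansions terminate (or have integer powers) and the inequality $M+2L-N\geq 0$ makes sense; it also relies on the annulus of absolute convergence $|q|^{2}<|\xi|<1$, which is why the integer-coupling contour is shifted to $\mathcal{C}_{\varepsilon}$. For the non-integer couplings that are the whole point of Theorem \ref{theorem_2}, the factor $(\theta_{1}(y)^{2})^{g}$ is not a Laurent series in $\xi$, the binomial series for non-integer exponents fail to converge absolutely on $|\xi|=1$ where the contour $[-\pi,\pi]$ actually sits, and the degree bookkeeping collapses. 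The paper avoids this entirely: in Lemma \ref{lemma_polynomial_expansion} it keeps $(\sin(y)^{2})^{g}$ intact as the Gegenbauer orthogonality weight, rewrites $\theta_{4}(\half(x\pm y))^{-(2g+\A)}$ as an exponential of a $q$-series in powers of cosines so that every power of $\cos$ is accompanied by at least the same power of $q$, and then uses the orthogonality \eqref{eq_Gegenbauer_orthog} (so the first surviving term is of order $q^{n}$, compensated by $q^{-n}$) together with a parity argument to pin the degree at $n+2\ell$; for $\A\notin\Z$ it supplements this with a resonance-free Gegenbauer-expansion argument via \eqref{eqproof_lemma_11}--\eqref{eqproof_lemma_18}. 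You correctly flagged the $q=0$ evaluation as the main obstacle, but the obstruction is broader than that single integral: the expansion scheme itself has to change, and without that replacement the claim that the image lies in $\mathcal{F}_{g+\A}$ with degree $n$ is not established.
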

(The proof is given in Section \ref{section_proof_of_theorem_2}.)

It follows from Theorem \ref{theorem_2} that the solutions satisfy the recursive relations
\begin{equation}
\psi_{n,g+\A}(x) = ( \mathfrak{K}_{g+\A,g}\psi_{n,g})(x)
\label{eq_recursive_integral_relation_4}
\end{equation}
which allows us to express the solutions of the non-stationary Lam\ee equation by the recursive procedure explained in Section \eqref{section_sketch_of_solutions}:
\begin{equation}
\begin{split}
\psi_{n,\numb\A + \gn}(x) = (\theta_{1}(x)^{2})^{\half( \numb \A + \gn)} \mathcal{M}_{(\gn)} q^{-\numb n} \int\limits_{[-\pi,\pi]^{\numb}}\!\!\! \frac{\d^{\numb}y}{(2\pi)^{\numb}} \frac{(\theta_{1}(y_{\numb})^{2})^{(\numb-1)\A+\gn}}{(\theta_{4}(\half(x+y_{\numb})) \theta_{4}(\half(x-y_{\numb})))^{(2\numb - 1)\A + 2\gn } } \\
\times \prod\limits_{j=2}^{\numb-1}\Bigl( \frac{(\theta_{1}(y_{j})^{2})^{(j-1)\A + \gn} }{( \theta_{4}( \half( y_{j+1} - y_{j} )) \theta_{4}(\half( y_{j+1} - y_{j})) )^{(2j-1)\A +2\gn} }\Bigr)  \frac{ (\theta_{1}(y_{1})^{2})^{\half\gn} \psi_{n,\gn}(y_{1})}{(\theta_{4}(\half(y_{2} + y_{1} )) \theta_{4}(\half(y_{2}- y_{1})) )^{2\gn+\A}} ,
\end{split}\end{equation}
with $\mathcal{M}_{(\gn)}\equiv \mathcal{M}_{n,(\numb-1)\A+\gn,\A}\cdots\mathcal{M}_{n, \A+ \gn,\A}$. In the special cases where $\gn = \pn \in\{ 0 , 1 \}$ we can use the seed functions in \eqref{eq_seed_functions_1} and the recursive procedure yields the explicit integral representations of the solutions:
\begin{equation}
\begin{split}
\psi_{n,\numb\A}(x) = (\theta_{1}(x)^{2})^{\half  \numb \A} \mathcal{M}_{(0)} q^{-\numb n} \int\limits_{[-\pi,\pi]^{\numb}}\!\!\! \frac{\d^{\numb}y}{(2\pi)^{\numb}} \frac{(\theta_{1}(y_{\numb})^{2})^{(\numb-1)\A}}{(\theta_{4}(\half(x+y_{\numb})) \theta_{4}(\half(x-y_{\numb})))^{(2\numb - 1)\A} } \\
\times \prod\limits_{j=1}^{\numb-1}\Bigl( \frac{(\theta_{1}(y_{j})^{2})^{(j-1)\A } }{( \theta_{4}( \half( y_{j+1} - y_{j} )) \theta_{4}(\half( y_{j+1} - y_{j})) )^{(2j-1)\A } }\Bigr)  \cos(n y_{1}) ,
\end{split}\end{equation}
and
\begin{equation}
\begin{split}
\psi_{n,\numb\A + 1}(x) = (\theta_{1}(x)^{2})^{\half( \numb \A + 1)} \mathcal{M}_{(1)} q^{-\numb n} \int\limits_{[-\pi,\pi]^{\numb}}\!\!\! \frac{\d^{\numb}y}{(2\pi)^{\numb}} \frac{(\theta_{1}(y_{\numb})^{2})^{(\numb-1)\A+1}}{(\theta_{4}(\half(x+y_{\numb})) \theta_{4}(\half(x-y_{\numb})))^{(2\numb - 1)\A + 2 } } \\
\times \prod\limits_{j=1}^{\numb-1}\Bigl( \frac{(\theta_{1}(y_{j})^{2})^{(j-1)\A + 1} }{( \theta_{4}( \half( y_{j+1} - y_{j} )) \theta_{4}(\half( y_{j+1} - y_{j})) )^{(2j-1)\A +21} }\Bigr) \Bigl( \frac{\sin((n+1)y_{1})^{2}}{G^{3/\A}\theta_{1}(y_{1})^{2}}\Bigr)^{\half}.
\end{split}\end{equation}

\subsection{Proof of Theorem \ref{theorem_2}}\label{section_proof_of_theorem_2}
The proof of Theorem \ref{theorem_2} will follow the same steps as the proof of Theorem \ref{theorem_1} in Section \ref{section_proof_of_theorem_1}: we show that the integral transform yields an analytic (normalizable) function (see Lemma \ref{lemma_normalizable}), show that the integral kernel satisfies the generalized kernel function identity (see Lemma \ref{lemma_gen_KF_id_4}), and that the integral transform preserves the form in \eqref{eq_ns_lame_eigenfunctions}-\eqref{eq_ns_lame_polynomial_expansion}. The final part will first be shown for the special case where $\A \notin \Z$ as the proofs are less technical and also allows us to discuss the uniqueness of the solutions.

\begin{lemma}\label{lemma_normalizable}
Let $\psi_{n,g}(x)$ be of the form \eqref{eq_ns_lame_eigenfunctions}-\eqref{eq_ns_lame_polynomial_expansion} and assume that $\psi_{n,g}$ satisfies \eqref{eq_recursive_integral_relation_4}. Then there exists a finite constant $C$, depending on $|q|$ and the parameters $(\A, g)$, such that
\begin{equation}
\lVert \psi_{n,g+\A} \rVert_{L^{2}}^{2} \leq C^{2} \lVert \psi_{n,g} \rVert_{L^{2}}^{2}
\label{eq_lemma_inequality}
\end{equation}
with respect to the standard $L^{2}([-\pi , \pi ] ,\dd x)$ norm.
\end{lemma}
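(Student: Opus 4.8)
The plan is to read the inequality \eqref{eq_lemma_inequality} as the statement that the integral transform $\mathfrak{K}_{g+\A,g}$ of \eqref{eq_integral_transform_4} is a bounded operator on $L^{2}([-\pi,\pi],\dd x)$, and to prove this by the elementary Hilbert--Schmidt estimate. First I would use the recursion \eqref{eq_recursive_integral_relation_4} together with \eqref{eq_integral_transform_4} to write
\[
\psi_{n,g+\A}(x) = \int_{-\pi}^{\pi}\frac{\dd y}{2\pi}\,\mathcal{K}(x,y)\,\psi_{n,g}(y),\qquad
\mathcal{K}(x,y)\equiv q^{-n}\mathcal{M}_{n,g,\A}\,\frac{(\theta_{1}(x)^{2})^{\half(g+\A)}(\theta_{1}(y)^{2})^{\half g}}{\bigl(\theta_{4}(\half(x+y))\,\theta_{4}(\half(x-y))\bigr)^{2g+\A}}.
\]
If $\mathcal{K}\in L^{2}([-\pi,\pi]^{2})$, then the Cauchy--Schwarz inequality applied in $y$, followed by integration in $x$, gives \eqref{eq_lemma_inequality} with $C^{2}=\tfrac{1}{(2\pi)^{2}}\int_{-\pi}^{\pi}\!\!\int_{-\pi}^{\pi}|\mathcal{K}(x,y)|^{2}\,\dd x\,\dd y$; this $C$ is finite and, with the degree $n$ held fixed as in Theorem \ref{theorem_2}, depends only on $|q|$ and the parameters $(\A,g)$. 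The finiteness of $\lVert\psi_{n,g}\rVert_{L^{2}}$ on the right-hand side is immediate, since a function of the form \eqref{eq_ns_lame_eigenfunctions}--\eqref{eq_ns_lame_polynomial_expansion} with $g\geq 0$ is continuous, hence bounded, on $[-\pi,\pi]$ for $|q|<1$.

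Thus the whole proof reduces to showing that $\mathcal{K}$ is square-integrable on the square, for which the essential point is that $\theta_{4}$ has no zeros on the real axis: from the product formula in \eqref{eq_Jacobi_theta_functions} each factor $1-2q^{2k-1}\cos(2u)+q^{4k-2}=(1-q^{2k-1}\e^{2\imag u})(1-q^{2k-1}\e^{-2\imag u})$ is bounded away from zero uniformly in real $u$ because $|q|<1$, so $\inf_{u\in\R}|\theta_{4}(u)|>0$; since $\half(x\pm y)$ is real for $x,y\in[-\pi,\pi]$ (which is precisely why the contour in \eqref{eq_integral_transform_4} may be kept on the real axis) the denominator of $\mathcal{K}$ is bounded. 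The numerator $\theta_{1}(x)^{2}$ is continuous on $[-\pi,\pi]$ and vanishes, to second order, only at $x\in\pi\Z$, so with $g\geq 0$ and $g+\A\geq 0$ (as holds throughout the recursive construction, or at worst $g+\A>-\half$) the factors $|\theta_{1}(x)^{2}|^{\half(g+\A)}$ and $|\theta_{1}(y)^{2}|^{\half g}$ are bounded, respectively square-integrable, near $\pi\Z$, while $q^{-n}$ and $\mathcal{M}_{n,g,\A}$ are finite constants. Hence $|\mathcal{K}|^{2}$ is integrable on $[-\pi,\pi]^{2}$ and the claim follows.

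The argument has no real obstacle --- it amounts to the observation that the transform is Hilbert--Schmidt. The only steps needing care are (i) the uniform lower bound $\inf_{u\in\R}|\theta_{4}(u)|>0$, which is where $\Im(\tau)>0$ (equivalently $0<|q|<1$) is used and which is responsible for $C$ diverging as $|q|\to 1^{-}$, and (ii) bookkeeping the endpoint powers of $\sin(\cdot)$ coming from the zeros of $\theta_{1}$, which is harmless in the range of couplings of interest since there all intermediate values of $g$ are nonnegative. Finally, iterating \eqref{eq_lemma_inequality} along the chain $\gn\to\A+\gn\to\cdots\to\numb\A+\gn$ shows that the full iterated integral representation produces an $L^{2}$ function, which is the normalizability statement invoked in Theorem \ref{theorem_2}.
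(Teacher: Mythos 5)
Your proposal is correct and follows essentially the same route as the paper: apply Cauchy--Schwarz to the integral transform \eqref{eq_integral_transform_4} and bound the Hilbert--Schmidt norm of the kernel using that $\theta_{1}$ is bounded and $\theta_{4}$ is non-vanishing on the real axis for $|q|<1$. Your additional bookkeeping of the zeros of $\theta_{1}$ and the explicit lower bound on $|\theta_{4}|$ only make explicit what the paper asserts with a reference to \cite{WhitWat}.
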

\begin{proof}
From \eqref{eq_recursive_integral_relation_4} and \eqref{eq_integral_transform_4}, we obtain that
\begin{align}
\lVert \psi_{n,g+\A}(x) \rVert^{2} &=  \lVert (\mathfrak{K}_{g+\A,g} \psi_{n,g}) \rVert^{2} \leq \int\limits_{-\pi}^{\pi}\frac{\d x}{2\pi} \int\limits_{-\pi}^{\pi} \frac{\d y}{2\pi} \Bigl| q^{-n} \frac{(\theta_{1}(x)^{2})^{\half(g+\A)} (\theta_{1}(y)^{2})^{\half g}}{(\theta_{4}(\half(x+y))\theta_{4}(\half(x-y)))^{2g+\A}} \Bigr|^{2} \left| \psi_{n,g}(y) \right|^{2} \nonumber \\
& \leq \lVert \psi_{n,g} \rVert^{2} \int\limits_{-\pi}^{\pi}\frac{\d x}{2\pi} \int\limits_{-\pi}^{\pi} \frac{\d y}{2\pi} \Bigl| q^{-n} \frac{(\theta_{1}(x)^{2})^{\half(g+\A)} (\theta_{1}(y)^{2})^{\half g}}{(\theta_{4}(\half(x+y))\theta_{4}(\half(x-y)))^{2g+\A}} \Bigr|^{2},
\label{eqproof_lemma_inequality}
\end{align}
by Cauchy-Schwarz inequality. We note that the r.h.s. of \eqref{eqproof_lemma_inequality} can be written as  $$\lVert \psi_{n,g}\rVert^{2} \lVert \mathfrak{K}_{g+\A,g}\rVert^{2}_{L^{2}([-\pi , \pi]^{2} , \d x \d y )}.$$ From the definitions in \eqref{eq_Jacobi_theta_functions} it is clear\footnote{Interested readers can check \eg \cite{WhitWat}.} that $\theta_{1}(x)$ is a bounded analytic function and that $\theta_{4}(x)$ is a non-zero (bounded) analytic function, for $| q | < 1$ and $x$ in the strip $ \{ x \in \C : |\Im(x) |< \pi \Im(\tau)\}$. Therefore, there exists a finite constant $c >0$, depending on $|q|$ and the model parameters, such that
\begin{equation}
\left| \frac{(\theta_{1}(x)^{2})^{\half (g+\A)} (\theta_{1}(y)^{2})^{\half g}}{(\theta_{4}(\half(x+y))\theta_{4}(\half(x-y)))^{2g+\A}} \right| \leq c.
\end{equation}
Taking $c^{2} = |q|^{n}  C^{2}$ yields \eqref{eq_lemma_inequality}.
\end{proof}
It follows that the integral transform also preserves analyticity and that the iterative integral representation is analytic if the seed function is analytic. (see also Section \ref{section_conclusions} for how to transform the irregular solutions.) This is clearly the case for the explicit integral representations with seed functions $\psi_{n,\pn}$ in \eqref{eq_seed_functions_1}.

(Recall the Lam\ee differential operator $H(x;g)$ in \eqref{eq_Lame_operator} and $\theta_{\nu}$-functions in \eqref{eq_Jacobi_theta_functions}.)
\begin{lemma}\label{lemma_gen_KF_id_4}
For $g,\A$ complex constants, $x,y$ complex variables, and the kernel function
\begin{equation}
k_{\mu}(x,y;g,\A) \equiv \frac{(\theta_{1}(x)^{2})^{\half (g+\A)} (\theta_{1}(y)^{2})^{\half g}}{(\theta_{\mu}(\half(x+y)) \theta_{\mu}(\half(x-y)))^{2g+\A}}
\label{eq_kernel_function_4}
\end{equation}
with $\mu \in \{1,2,3,4\}$ fixed. Then 
\begin{equation}
\Bigl(\frac{2\imag }{\pi} \A \frac{\partial}{\partial \tau} + H(x;g+\A) - H(y;g) - C_{\mu} \Bigr) k_{\mu}(x,y)=0
\label{eq_kernel_function_identity_mu}
\end{equation}
with $H$ in \eqref{eq_Lame_operator} and
\begin{equation}
C_{\mu} = \begin{cases}
4\A(1-2g-\A) \frac{\eta_{1}}{\pi} + 6 \A( 2g + \A) (\frac{\eta_{1}}{\pi} - \frac{1}{12} ),  & \text{ if } \mu=1,2 \\
\A(2g+\A) + 4\A(1-2g-\A) \frac{\eta_{1}}{\pi} + 6 \A( 2g + \A) (\frac{\eta_{1}}{\pi} - \frac{1}{12} ), &\text{ if } \mu=3,4
\end{cases}
\label{eq_kernel_function_constants}
\end{equation}
\end{lemma}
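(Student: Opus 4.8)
To prove Lemma~\ref{lemma_gen_KF_id_4}, the plan is to derive all four cases $\mu\in\{1,2,3,4\}$ of \eqref{eq_kernel_function_identity_mu} from the single identity \eqref{eq_kernel_function_identity} of Lemma~\ref{lemma_gen_KF_id_1}, using only the elementary transformation rules relating $\theta_{2},\theta_{3},\theta_{4}$ to $\theta_{1}$. For $\mu=1$, the kernel \eqref{eq_kernel_function_4} coincides with the kernel \eqref{eq_kernel_function} up to a factor depending on $\tau$ alone: from $\vartheta_{1}'(0)=2q^{\quarter}\theta_{1}'(0)$ and $\theta_{1}'(0)=G^{3}$ one gets $\kk(x,y)=G^{3(2g+\A)}k_{1}(x,y)$. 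Since $H(\,\cdot\,;g)$ carries no $\tau$-derivative, applying $\frac{2\imag}{\pi}\A\frac{\partial}{\partial\tau}+H(x;g+\A)-H(y;g)$ to $k_{1}=G^{-3(2g+\A)}\kk$ and invoking Lemma~\ref{lemma_gen_KF_id_1} gives $C\,k_{1}$ plus the extra term $\frac{2\imag}{\pi}\A\,\partial_{\tau}\bigl(\log G^{-3(2g+\A)}\bigr)k_{1}=6\A(2g+\A)\bigl(\frac{\eta_{1}}{\pi}-\frac{1}{12}\bigr)k_{1}$, by the formula for $\frac{\imag}{\pi}G^{-1}\partial_{\tau}G$ in Section~\ref{section_notation}; this reproduces $C_{1}$, and is the same bookkeeping already carried out in the passage from $C$ to $C_{1}$ in Section~\ref{section_proof_of_theorem_1}.

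For $\mu=2$ I would use $\theta_{2}(z)=\theta_{1}\bigl(z+\frac{\pi}{2}\bigr)$ together with $\theta_{1}(x+\pi)^{2}=\theta_{1}(x)^{2}$ to obtain the \emph{exact} identity $k_{2}(x,y)=k_{1}(x+\pi,y)$. Because $\wp$ has period $\pi$, the operator $H(\,\cdot\,;g)$ commutes with the substitution $x\mapsto x+\pi$, while $\frac{\partial}{\partial\tau}$ is unaffected (the shift does not involve $\tau$); substituting $x\mapsto x+\pi$ into the $\mu=1$ identity then yields $C_{2}=C_{1}$. The case $\mu=3$ is identical after one more real half-period shift: $\theta_{3}(z)=\theta_{4}\bigl(z+\frac{\pi}{2}\bigr)$ gives $k_{3}(x,y)=k_{4}(x+\pi,y)$, hence $C_{3}=C_{4}$.

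The only case with genuine content is $\mu=4$, which I would handle by the imaginary half-period shift $z\mapsto z+\frac{\pi\tau}{2}$. From $\theta_{4}(z)=-2\imag q^{\half}\e^{\imag z}\,\theta_{1}\bigl(z+\frac{\pi\tau}{2}\bigr)$ and the quasi-periodicity $\theta_{1}(x+\pi\tau)=-q^{-1}\e^{-2\imag x}\theta_{1}(x)$, a short computation gives $k_{4}(x,y)=\Phi(x,\tau)\,\widetilde{k}(x,y)$, where $\widetilde{k}(x,y)\equiv k_{1}(x+\pi\tau,y)$ and $\Phi(x,\tau)=(-2\imag q^{\half})^{-2(2g+\A)}q^{g+\A}\e^{\imag\A x}$. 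The step requiring care is the chain rule: since $\wp$ also has period $\pi\tau$, so that $\wp(x+\pi\tau)=\wp(x)$, and since $\frac{\partial}{\partial\tau}$ at fixed $x$ applied to $\widetilde{k}$ equals $\frac{\partial}{\partial\tau}$ at fixed first argument plus $\pi\frac{\partial}{\partial x}$, evaluating the $\mu=1$ identity at the shifted argument $x+\pi\tau$ produces
\begin{equation}
\Bigl(\tfrac{2\imag}{\pi}\A\tfrac{\partial}{\partial\tau}+H(x;g+\A)-H(y;g)-C_{1}\Bigr)\widetilde{k}(x,y)=2\imag\A\,\tfrac{\partial}{\partial x}\widetilde{k}(x,y).
\end{equation}

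It remains to apply the operator to $k_{4}=\Phi\widetilde{k}$. The cross term $-2(\partial_{x}\Phi)(\partial_{x}\widetilde{k})=-2\imag\A\,\Phi\,\partial_{x}\widetilde{k}$ exactly cancels $\Phi$ times the right-hand side of the displayed identity, while the remaining scalar contributions $-\Phi^{-1}\partial_{x}^{2}\Phi=\A^{2}$ and $\frac{2\imag}{\pi}\A\,\partial_{\tau}\log\Phi=2g\A$ (the latter from $\partial_{\tau}\log q=\imag\pi$, whence $\partial_{\tau}\log\Phi=-\imag\pi g$) add up to $\A(2g+\A)$; hence $C_{4}=C_{1}+\A(2g+\A)$. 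The cancellation of the first-order derivative $\partial_{x}\widetilde{k}$ — forced by the precise exponential prefactor in $\Phi$, which itself originates in the half-period transformation rule for $\theta_{1}$ — is the crux of the argument, everything else being routine tracking of constants. Combining the four cases yields \eqref{eq_kernel_function_identity_mu}--\eqref{eq_kernel_function_constants}.
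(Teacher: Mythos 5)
Your proof is correct and follows essentially the same route as the paper's Appendix B: the substantive $\mu=4$ case is obtained from $\mu=1$ by an imaginary half-period shift combined with conjugation by an exponential prefactor, whose cross term cancels the first-order derivative produced by the chain rule and whose $q$-power supplies the extra constant $\A(2g+\A)$, while the $\mu=2,3$ cases follow by real half-period shifts. The only (immaterial) differences are that you shift $x$ where the paper shifts $y$, and that you obtain the $\mu=1$ case by dividing the kernel of Lemma \ref{lemma_gen_KF_id_1} by $G^{3(2g+\A)}$ and tracking $\partial_\tau \log G$ — precisely the bookkeeping the paper performs in Section \ref{section_proof_of_theorem_1} — instead of repeating the direct $\theta_{1}$/Weierstrass computation with which the appendix establishes the $\mu=1$ identity from scratch.
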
 \noindent
(The proof is given in Appendix \ref{appendix_proof_of_lemmas}.)

We see that the integral transform in \eqref{eq_integral_transform_4} has \eqref{eq_kernel_function_4} with $\mu=4$ as integral kernel (up to multiplication by the $q$ factor). The action of the operator $(2\imag / \pi) \A (\partial / \partial \tau) +H $ follows from the generalized kernel function identity:
\begin{align*}
&\Bigl( \frac{2\imag}{\pi} \A \frac{\partial}{\partial \tau} + H(x ; g+\A)\Bigr) ( \mathfrak{K}_{g+\A,g}\psi_{n,g})(x) \\
&= q^{-n} \int\limits_{-\pi}^{\pi} \frac{\d y}{2\pi} \psi_{n,g}(y) \Bigl( \frac{2\imag}{\pi} \A \frac{\partial}{\partial \tau} + H(x ; g+\A)\Bigr) k_{4}(x,y)  \\
&+  \int\limits_{-\pi}^{\pi} \frac{\d y}{2\pi}\Bigl( q^{-n} \frac{2\imag}{\pi} \A \frac{\partial \psi_{n,g}(y)}{\partial \tau} k_{4}(x,y)  + \frac{2\imag }{\pi} \A \frac{\partial q^{-n}}{\partial \tau} k_{4}(x,y) \psi_{n,g}(y) \Bigr) \\
&= q^{-n} \int\limits_{-\pi}^{\pi} \frac{\d y}{2\pi} \Bigl( \psi_{n,g}(y) \Bigl( H(y ; g) + 2\A n + C_{4} \Bigr) k_{4}(x,y) + \frac{2\imag}{\pi} \A \frac{\partial \psi_{n,g}(y)}{\partial \tau} k_{4}(x,y) \Bigr)\\ 
&= B.T. +  q^{-n} \int\limits_{-\pi}^{\pi} \frac{\d y}{2\pi} k_{4}(x,y)   \Bigl( \frac{2\imag}{\pi} \A \frac{\partial}{\partial \tau} + H(y ; g) +2 \A n + C_{4}\Bigr) \psi_{n,g}(y) 
\end{align*}
where $B.T.$ is given by
\begin{equation}
B.T. = q^{-n} \int\limits_{-\pi}^{\pi} \frac{\d y}{2\pi} \frac{\partial}{\partial y} \Bigl( \psi_{n,g}(y) \frac{\partial}{\partial y} k_{4}(x,y) - k_{4}(x,y) \frac{\partial}{\partial y} \psi_{n,g}(y) \Bigr) = 0.
\end{equation}

Note that the constant $C_{4}$, and also $C_{3}$, can be written as 
\begin{equation}\begin{split}
C_{4} &= \bigl( (g+\A)^{2} - g^{2}\bigr) - \bigl( 4 (g+\A) ( g+\A-1)  - 4 g ( g- 1) \bigr) \frac{\eta_{1}}{\pi} + \bigl( 6 (g+\A)^{2} - 6 g^{2} \bigr) \Bigl( \frac{\eta_{1}}{\pi} - \frac{1}{12}\Bigr)  \\  &= E_{0,g+\A} - E_{0,g},
\end{split}
\end{equation}
with $E_{n,g}$ in \eqref{eq_energy_eigenvalues}. It then is a straightforward check that the function in \eqref{eq_integral_transform_4} is a solutions of \eqref{eq_non_stationary_lame_defining} with $E=E_{n,g+\A}$:
\begin{equation}
\begin{split}
E&=E_{n,g} + 2\A n +C_{4} \\
& = \bigl( (n+g)^{2} -4g(g-1) \frac{\eta_{1}}{\pi} + 6g^{2}\Bigl( \frac{\eta_{1}}{\pi}-\frac{1}{12}\Bigr) \bigr) + 2\A n \\ &+ \Bigl( \A(2g+\A) + 4\A(1-2g-\A) \frac{\eta_{1}}{\pi} + 6 \A( 2g + \A) (\frac{\eta_{1}}{\pi} - \frac{1}{12} )\Bigr) \\
&= E_{n,g+\A}.
\end{split}
\end{equation}

Before proceeding with Lemma \ref{lemma_polynomial_expansion}, we wish to discuss the special cases where the model parameter $\A$ is not an integer. The non-integer case pertains to the issue of uniqueness of the solutions of \eqref{eq_non_stationary_lame_defining}, and the uniqueness of the functions $\psi_{n,g}$ defined by \eqref{eq_recursive_integral_relation_4}. The proof of the last part of Theorem \ref{theorem_2} can be done with less technical, and more accessible, arguments when $\A$ is not an integer. We therefore also give the proof for this special case, for the convenience of the reader. 

First we show that the integral transform in \eqref{eq_integral_transform_4} indeed reduces to the Gegenbauer polynomials in the trigonometric limit.
Consider the $q \downarrow 0$ limit of the equation
\begin{equation}
\Bigl( -2 \A q \frac{\partial}{\partial q} - H(x;g+\A) -E_{n,g+\A} \Bigr) ( \mathfrak{K}_{g+\A,g} \psi_{n,g})(x) = 0.
\label{eqproof_lemma_6}
\end{equation}
Using the definition of the integral operator in \eqref{eq_integral_transform_4}, $E_{n,g}$ in \eqref{eq_energy_eigenvalues}, and the trigonometric limits in \eqref{eq_trig_limit_Weierstrass}-\eqref{eq_trig_limit_theta} yields
\begin{equation}
\Bigl( -\frac{\partial^{2}}{\partial x^{2}} + (g+\A)(g+\A-1)\frac{1}{\sin(x)^{2}} - (n+g+\A )^{2} \Bigr) (\sin(x)^{2})^{\half (g+\A)} \lim_{q\downarrow 0} \int\limits_{-\pi}^{\pi}\frac{\d y}{2\pi} q^{-n} k_{4}(x,y) \psi_{n,g}(y) = 0.
\label{eqproof_lemma_7}
\end{equation}
It then follows that either 
\begin{equation}
\lim_{q\downarrow 0} \int\limits_{-\pi}^{\pi}\frac{\d y}{2\pi} q^{-n} k_{4}(x,y) \psi_{n,g}(y) = 0
\label{eqproof_lemma_8}
\end{equation} or that
\begin{equation}
\lim_{q\downarrow 0} \int\limits_{-\pi}^{\pi}\frac{\d y}{2\pi} q^{-n} k_{4}(x,y) \psi_{n,g}(y) = \mathcal{M} C_{n}^{(g)}(\cos(x)),
\end{equation}
for some non-zero constant $\mathcal{M}$. (The normalization constant is determined at the end of this Section.)
Suppose that \eqref{eqproof_lemma_8} holds; it was shown in \cite{ruijsenaars2009} that if $\phi \in L^{1}([-\pi , \pi ] , \d x )$, then 
\begin{equation}
\int_{-\pi}^{\pi} \frac{\d y}{2\pi} \frac{(\theta_{1}(y)^{2})^{g}}{ \Bigl( \theta_{4}(\half(x-y) ) \theta_{4}(\half(x+y))\Bigr)^{2g+\A}} \phi(y) = 0 \quad \text{ if and only if } \phi(y) = 0.
\end{equation}(We wish to point out that the proof in \cite{ruijsenaars2009} was for $\A=\Im(\tau)=0$ but that the arguments still hold, as long as $2g+\A>0$.) Therefore there must exist an integer $m$ such that
\begin{equation}
f(\cos(x)) \equiv \lim_{q\downarrow 0} q^{m} \int\limits_{-\pi}^{\pi}\frac{\d y}{2\pi} q^{-n} k_{4}(x,y) \psi_{n,g}(y)  \neq 0.
\label{eqproof_lemma_9}
\end{equation}
It then follows that $f(\cos(x))$ satisfies
\begin{equation}
\Bigl( -\frac{\partial^{2}}{\partial x^{2}} + (g+\A)(g+\A-1)\frac{1}{\sin(x)^{2}} - (n+g+\A )^{2} - 2\A m \Bigr) (\sin(x)^{2})^{\half (g+\A)} f(\cos(x)) = 0,
\label{eqproof_lemma_14}
\end{equation}
but the eigenvalues of the Schr{\"o}dinger operator, with the P{\"o}schl-Teller potential, does not match, \ie there is no integer $k$ such that $(k+g+\A)^{2} = (n+g+\A )^{2} + 2\A m$ for $\A\in\R\setminus\Z$. Then the only (normalizable) solution of  \eqref{eqproof_lemma_14} is $f(\cos(x))=0$ and we have a contradiction.

We turn to show that the polynomials $\mathcal{P}^{(\ell)}_{n,g}(z)$ are of order $n+2\ell$ and present a method for computing the polynomials explicitly:
Let us write the solutions $\psi_{n,g}$ of \eqref{eq_non_stationary_lame_defining} as 
\begin{equation}
\psi_{n,g}(x) = (\sin(x)^{2})^{\half g} \tilde{\mathcal{P}}_{n,g}(\cos(x)), \quad \tilde{\mathcal{P}}_{n,g}(z) \equiv  \prod_{k\in\N}( 1 - 2 q^{2k} ( 2 z^{2} - 1) q^{4k})^{2g} \mathcal{P}_{n,g}(z)
\label{eqproof_lemma_10}
\end{equation}
and it is clear that the functions $\tilde{\mathcal{P}}_{n,g}(z)$ has the expansion
\begin{equation}
\tilde{\mathcal{P}}_{n,g}(z) = \sum_{\ell \in\N_{0}} \tilde{\mathcal{P}}^{(\ell)}_{n,g}(z) q^{2 \ell}.
\end{equation}
If $\psi_{n,g}$ is a solution of the non-stationary Lam\ee equation, then the polynomials $\tilde{\mathcal{P}}^{(\ell)}_{n,g}(z)$, for a fixed $\ell \in\N_{0}$, have the expansion
\begin{equation}
\tilde{\mathcal{P}}^{(\ell)}_{n,g}(z) = \sum_{m=-\ell}^{\ell} d^{(\ell)}_{n,n+2m} C_{n+2m}^{(g)}(z) \quad ( \ell \in \N_{0} )
\label{eq_ns_lame_polynomial_Gegenbauer_expansion}
\end{equation}
for some constants $d^{(\ell)}_{n,m}$ depending on the model parameters. Since the functions $\mathcal{P}$ and $\tilde{\mathcal{P}}$ are related by \eqref{eqproof_lemma_10}, it follows that $\tilde{\mathcal{P}}_{n,g}^{(\ell)}(z)$ is a polynomial of order $n+2\ell$. We proceed with the following calculations to show this.  We can expand the functions $\tilde{\mathcal{P}}^{(\ell)}_{n,g}(z)$ in terms of the Gegenbauer polynomials, \ie
\begin{equation}
\tilde{\mathcal{P}}^{(\ell)}_{n,g}(z) = \sum_{m\in\N_{0}} d^{(\ell)}_{n,m} C_{m}^{(g)}(z) , \quad d^{(\ell)}_{n,m}\equiv \frac{\int_{-1}^{1} \d z ( 1 - z^{2} )^{g - \half} \tilde{\mathcal{P}}^{(\ell)}_{n,g}(z)C_{m}^{(g)}(z)}{\int_{-1}^{1} \d z ( 1 - z^{2} )^{g - \half} C_{m}^{(g)}(z)^{2}} = \frac{1}{\lVert C_{m}^{(g)} \rVert^{2}} \langle C_{m}^{(g)}(z) ,  \tilde{\mathcal{P}}^{(\ell)}_{n,g}(z) \rangle
\end{equation}
with $\langle \cdot , \cdot \rangle$ the standard inner product of $L^{2}([-1,1], (1-z^{2})^{g-\half} \d z )$ divided by $\pi$. We now show that $d^{(\ell)}_{n,m}=0$, if $| n - m | > 2\ell$, using \eqref{eqproof_lemma_11}: define the differential operator
\begin{equation}
h^{(0)}(\cos(x) ; g) \equiv (\sin(x)^{2})^{-\half g } H^{(0)}(x;g) (\sin(x)^{2})^{\half g} = - \frac{\partial^{2}}{\partial x^{2}} - 2g \cot(x) \frac{\partial}{\partial x} + g^{2}
\label{eq_Gegenbauer_diff_op}
\end{equation}
which is symmetric with respect to the inner product $\langle \cdot , \cdot \rangle$ by definition. Then consider
\begin{align}
\langle h^{(0)}(z;g) C_{m}^{(g)}(z) , \tilde{\mathcal{P}}_{n,g}^{(\ell)}(z) \rangle = \langle  C_{m}^{(g)}(z) , h^{(0)}(z;g)\tilde{\mathcal{P}}_{n,g}^{(\ell)}(z) \rangle. \label{eqproof_lemma_17}\end{align}
The action of the operator $h^{(0)}(x;g) $ in \eqref{eq_Gegenbauer_diff_op} on the polynomials $\tilde{\mathcal{P}}_{n,g}^{(\ell)}$ is obtained by inserting the ansatz \eqref{eqproof_lemma_10} in \eqref{eq_non_stationary_lame_defining}. Straightforward calculations yield
\begin{multline}
\sum_{\ell\in\N_{0}} \Bigl[ \bigl( -4 \A \ell + H^{(0)}(x;g) -E_{n,g}^{(0)} \bigr) (\sin(x)^{2})^{\half g} \tilde{\mathcal{P}}^{(\ell)}_{n,g}(z) q^{2 \ell} \\  -  \sum_{ \ell' \in \N } \Bigl( V_{\ell'}(\cos(x))  + E_{n,g}^{(\ell')} \Bigr)(\sin(x)^{2})^{\half g} \tilde{\mathcal{P}}^{(\ell)}_{n,g}(z) q^{2 (\ell + \ell')} \Bigr] = 0,
\label{eqproof_lemma_15}
\end{multline}
with $H^{(0)}(x,g)$ the Schr{\"o}dinger operator in \eqref{eq_Poschl_Teller_Hamiltonian}, $E_{n,g} = \sum_{\ell\in\N_{0}}E_{n,g}^{(\ell)} q^{2\ell}$, and $V_{\ell}(\cos(x))$ a short-hand for the $q$ dependent, non-constant terms of the $\wp$-function in \eqref{eq_non_stationary_lame_defining} (see also \eqref{eq_Weierstrass_P_expansion}), \ie 
\begin{equation}
-8g(g-1)\sum_{\ell\in\N} \frac{q^{2\ell}}{1- q^{2\ell}}\cos(2\ell y) =  \sum_{\ell\in\N} V_{\ell}(\cos(x))q^{2\ell}.\label{eqproof_lemma_19} \end{equation} (It is clear, from the relation above, that $V_{\ell}(\cos(x))$ is a polynomial of degree $\ell$ in $\cos(x)^{2}$).
Collecting the same powers of $q$ in \eqref{eqproof_lemma_15} yields the relations
\begin{multline}
\bigl( - 4 \A \ell + H^{(0)}(x;g) -E_{n,g}^{(0)} \bigr) (\sin(x)^{2})^{\half g} \tilde{\mathcal{P}}^{(\ell)}_{n,g}(\cos(x))  \\ = \sum_{\ell'=1}^{\ell} \Bigl( V_{\ell'}(\cos(x)) + E_{n,g}^{(\ell')} \Bigr)  (\sin(x)^{2})^{\half g} \tilde{\mathcal{P}}^{(\ell- \ell')}_{n,g}(\cos(x))
\label{eqproof_lemma_11}
\end{multline}
for all $\ell \in \N_{0}$.
Inserting \eqref{eqproof_lemma_11} into \eqref{eqproof_lemma_17} yields
\begin{equation}
(E_{m,g}^{(0)} - E_{n,g}^{(0)} - 4\A\ell) d^{(\ell)}_{n,m} = \sum\limits_{\ell' = 1}^{\ell} \langle C_{m}^{(g)}(z), (V_{\ell'} + E_{n,g}^{(\ell')}) \tilde{\mathcal{P}}^{(\ell - \ell')}_{n,g}(z) \rangle
\label{eqproof_lemma_18}
\end{equation}
Then for any $\ell \in\N $ insert the relation 
\begin{align}
\sum_{\ell'=1}^{\ell} \langle C_{m}^{(g)}(z) , V_{\ell'}(z) \tilde{\mathcal{P}}_{n,g}^{(\ell - \ell')}(z)\rangle &= \sum_{\ell' = 1 }^{\ell} \sum_{\nu=0}^{\ell'} v_{\nu} \langle C_{m}^{(g)}(z) , z^{2\nu} \mathcal{P}_{n,g}^{(\ell - \ell') }(z)\rangle \\
&= \sum_{\ell'=1}^{\ell} \sum_{\nu=0}^{\ell} \sum_{\mu=-(\ell - \ell')}^{(\ell - \ell')}v_{\nu} d_{n,\mu}^{(\ell-\ell')} \langle  C_{m}^{(g)}(z) , z^{2\nu}C_{n+2\mu}^{(g)}(z)\rangle,
\label{eqproof_lemma_12}
\end{align}
for some coefficients $v_{\nu}$ obtained from the expansion of $V_{\ell'}(z)$ in \eqref{eqproof_lemma_19}. Using the three-term recurrence relation in \eqref{eq_three_term_rec} and the orthogonality of the Gegenbauer polynomials yields\footnote{The term $ z^{2\nu}C_{n+2\mu}^{(g)}(z) $ in \eqref{eqproof_lemma_12}, is given by linear combinations of Gegenbauer polynomials $C_{s}^{(g)}(z)$ with $s= n +2\mu - 2\nu , n + 2 \mu - 2\nu +2 , \ldots , n + 2\mu + 2\nu.$} that the l.h.s. of \eqref{eqproof_lemma_18} is identically $0$ if $m > n + 2 (\ell - \ell') + 2 \ell' = n + 2\ell $ or if  $m < n - 2( \ell - \ell') - 2\ell' = n - 2 \ell$. 

\begin{remark}\label{remark_resonance_cases}
It is clear that the arguments given above do not hold in the special cases where the generalized energy differences vanish, \ie $$ E_{m,g}^{(0)} - E_{n,g}^{(0)}-4\A \ell = (m-n)(n+m+2g) - 4\A \ell = 0 \quad (n,m\in\N_{0},\ell\in\N),$$ which we refer to as resonance cases \cite{langmann:7,AtaiLangmann2016_2}. (Recall that $E_{n,g}^{(0)} = (n+g)^{2}$.) In those cases the coefficients $d_{n,m}^{(\ell)}$ can be given arbitrary value and we can set $d_{n,m}^{(\ell)} = 0$, if $| n- m| \geq 2\ell$, consistently and without loss of generality.

These resonance cases cannot occur if the model parameters $(\A,g)$ (recall from Section \ref{section_sketch_of_solutions}) are sufficiently irrational or if they have a non-zero imaginary part. (The imaginary $\A$ case is also discussed more extensively in \cite{AtaiLangmann2016_2}.) The resonance cases seem to be most prevalent if the model parameters $(\A, g )$ take integer values, yet it was proven in Section \ref{section_integer couplings} that similar results\footnote{Meaning that the $\mathcal{P}^{(\ell)}_{n,g}(z)$ are also polynomials of order $n+2\ell$ and not that they have the expansion in \eqref{eq_ns_lame_polynomial_Gegenbauer_expansion}.} also hold in the integer cases. This yields strong indications that the resonance singularities are removable, as conjectured in \cite{AtaiLangmann2016_2}. It was also proven in \cite{Takemura2004_1} that the resonances do not occur in the $\A=0$ case, for $g \notin -\N / 2$.
\end{remark}

\begin{remark}\label{remark_uniqueness}
If there are no resonances for all $\ell \in\N_{0}$ and $m=n-2\ell , \ldots , n+2\ell$, where $n\in\N_{0}, \A\in\R, g\geq 0$ are fixed, then \eqref{eqproof_lemma_10}, determined by \eqref{eq_ns_lame_polynomial_Gegenbauer_expansion}, \eqref{eqproof_lemma_11}, and the normalization in \eqref{eq_ns_lame_polynomial_expansion}, yield a solution of the non-stationary Lam\ee equation \eqref{eq_non_stationary_lame_defining} which is unique, up to multiplication by an analytic function of $q^{2}$ that preserves \eqref{eq_ns_lame_polynomial_expansion}. We also believe that this perturbative construction will yield an analytic function: It has been proven by Takemura \cite{Takemura2004_1} for the Heun case, \ie $\A=0$, that these coefficients converge and that the perturbative construction yields an $L^{2}$ function. A special case of the approach for the elliptic Calogero-Sutherland model by Langmann \cite{langmann:7} also proved the convergence using an unconventional basis for the perturbative expansion (see discussion in Section 1 of \cite{AtaiLangmann2016_2}).

We conjecture that the integral transforms also yield unique solutions (up to multiplication) in the cases where no resonances can occur for distinct seed functions.
\end{remark}

\begin{lemma}\label{lemma_polynomial_expansion}
Let $\psi_{n,g}$ be as in Theorem \ref{theorem_2}, then Eq. \eqref{eq_theorem_integral_transform_4} is of the form
\begin{equation}
(\theta_{1}(x)^{2})^{\half ( g + \A)} \Bigl( C_{n}^{(g+\A)}(\cos(x)) + \sum_{\ell\in\N} \mathcal{P}_{n,g}^{(\ell)}(\cos(x)) q^{2\ell}\Bigr),
\end{equation}
with $C_{n}^{(g)}(z)$ the Gegenbauer polynomials and $\mathcal{P}_{n,g}^{(\ell)}(z)$ a polynomial of order $n+2\ell$.
\end{lemma}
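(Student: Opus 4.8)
\emph{Proof strategy.} The plan is to peel off the explicit prefactor, reduce the statement to a claim about the $q$-expansion of a single function of $\cos(x)$, and then combine the trigonometric-limit analysis already carried out above with the perturbative relations \eqref{eqproof_lemma_11}--\eqref{eqproof_lemma_18}. The key observation is that inside the integral in \eqref{eq_integral_transform_4} the only $x$-dependence resides in $\theta_{4}(\half(x+y))\theta_{4}(\half(x-y))$, which (by the location of the zeroes of $\theta_{4}$) is a nowhere-vanishing, even, $2\pi$-periodic, analytic function of $x$ on $\{|\Im(x)|<\pi\Im(\tau)\}$, hence a function of $\cos(x)$ alone. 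Pulling out $(\theta_{1}(x)^{2})^{\half(g+\A)}$ thus gives $(\mathfrak{K}_{g+\A,g}\psi_{n,g})(x)=(\theta_{1}(x)^{2})^{\half(g+\A)}\,\mathcal{Q}_{n,g}(\cos(x))$ for a ($q$-dependent) function $\mathcal{Q}_{n,g}(z)$ coming from the remaining integral, and it only remains to prove: (i) $\mathcal{Q}_{n,g}$ is analytic in $q$ at $q=0$ with $\mathcal{Q}_{n,g}(z;0)=C_{n}^{(g+\A)}(z)$; and (ii) its $q^{2\ell}$-coefficient is a polynomial of degree $n+2\ell$.

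For (i) I would invoke that, by Lemma~\ref{lemma_gen_KF_id_4} and the computation following it, $(\mathfrak{K}_{g+\A,g}\psi_{n,g})$ solves \eqref{eq_non_stationary_lame_defining} for $(\A,g+\A)$ with $E=E_{n,g+\A}$, and then run the $q\downarrow 0$ argument already given in \eqref{eqproof_lemma_6}--\eqref{eqproof_lemma_14}: for the smallest $m$ making $\lim_{q\downarrow 0}q^{m}\mathcal{Q}_{n,g}$ nonzero, the leading coefficient is a normalizable P\"oschl--Teller eigenfunction \eqref{eq_Poschl_Teller_eigenfunctions} at parameter $g+\A$ with eigenvalue $(n+g+\A)^{2}+2\A m$; injectivity of the $\theta_{4}$-integral operator from \cite{ruijsenaars2009} excludes the trivial limit, and the P\"oschl--Teller spectrum then forces $m=0$ and the eigenfunction to be $\propto C_{n}^{(g+\A)}$ --- for $\A\notin\Z$ this is the spectral mismatch used in \eqref{eqproof_lemma_14}, and for $\A\in\Z$ the same conclusion follows either from analyticity in $\A$ of the normalized transform or from the integer-coupling analysis of Section~\ref{section_integer couplings}. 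The choice of $\mathcal{M}_{n,g,\A}$ in \eqref{eq_norm_constant_4} then normalizes the constant to $1$. That only even powers of $q$ occur is seen by splitting $(\mathfrak{K}_{g+\A,g}\psi_{n,g})$ into its parts even and odd under $q\mapsto-q$; since \eqref{eq_non_stationary_lame_defining} depends on $q$ only through $q^{2}$, both are solutions with $E=E_{n,g+\A}$, and the odd part has vanishing trigonometric limit, so by the same spectral argument it must vanish.

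For (ii) I would apply the argument of \eqref{eqproof_lemma_10}--\eqref{eqproof_lemma_18}, with $g$ replaced by $g+\A$, directly to the solution $F:=(\mathfrak{K}_{g+\A,g}\psi_{n,g})$: writing $F(x)=(\sin(x)^{2})^{\half(g+\A)}\tilde{\mathcal{P}}_{n,g+\A}(\cos(x))$ as in \eqref{eqproof_lemma_10}, the relations \eqref{eqproof_lemma_11} hold for $F$ because it solves the non-stationary Lam\ee equation with the matching energy, and, starting from $\tilde{\mathcal{P}}^{(0)}_{n,g+\A}=C_{n}^{(g+\A)}$ from (i), an induction on $\ell$ using Gegenbauer orthogonality and the three-term recurrence yields $\tilde{\mathcal{P}}^{(\ell)}_{n,g+\A}=\sum_{|m|\le\ell}d^{(\ell)}_{n,n+2m}C^{(g+\A)}_{n+2m}$, a polynomial of degree $n+2\ell$, the finitely many resonant $\ell$ being absorbed exactly as in Remarks~\ref{remark_resonance_cases} and~\ref{remark_uniqueness}. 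Transferring this back through the polynomial relation in \eqref{eqproof_lemma_10} shows that the $q^{2\ell}$-coefficient $\mathcal{P}^{(\ell)}_{n,g}$ of $\mathcal{Q}_{n,g}$ is a polynomial of degree $n+2\ell$, which is the assertion; together with Lemmas~\ref{lemma_normalizable} and~\ref{lemma_gen_KF_id_4} this also completes the proof of Theorem~\ref{theorem_2}.

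I expect the main obstacle to lie in part (i): establishing that the integral in $\mathcal{Q}_{n,g}$ vanishes to order exactly $n$ at $q=0$ (so the $q^{-n}$ prefactor produces a finite, nonzero, Gegenbauer-proportional limit), and disposing of the resonant parameter values and of the integer-$\A$ case, where the clean P\"oschl--Teller spectral-mismatch argument degenerates and one must instead track the analyticity in $\A$ of the normalized transform or fall back on the separate results of Section~\ref{section_integer couplings}.
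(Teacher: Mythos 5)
Your strategy is, in essence, the argument the paper itself sketches immediately \emph{before} the lemma as the ``less technical'' alternative for $\A\notin\Z$: identify the $q\downarrow 0$ limit through the P{\"o}schl--Teller spectrum and the injectivity result of \cite{ruijsenaars2009} (Eqs.\ \eqref{eqproof_lemma_6}--\eqref{eqproof_lemma_14}), then control the $q^{2\ell}$-coefficients through the Gegenbauer recursion \eqref{eqproof_lemma_11}--\eqref{eqproof_lemma_18}. That is not the proof the paper gives for the lemma, and the difference matters. The relation \eqref{eqproof_lemma_18} determines $d^{(\ell)}_{n,m}$ only when the generalized energy difference $(m-n)(m+n+2g+2\A)-4\A\ell$ is non-zero; at resonant parameters the coefficient is left undetermined, and nothing in your argument shows that the \emph{particular} function produced by $\mathfrak{K}_{g+\A,g}$ has $d^{(\ell)}_{n,m}=0$ for $|m-n|>2\ell$ there. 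Remarks \ref{remark_resonance_cases} and \ref{remark_uniqueness} only assert that one \emph{may} set these coefficients to zero in a perturbative construction, which is a different statement. Since Theorem \ref{theorem_2} is claimed for all real $\A$ with $2g+\A>0$, and resonances occur for a dense set of $\A$ (most severely at integers), this is a genuine gap rather than an edge case. Your proposed patches do not close it as stated: Section \ref{section_integer couplings} analyses the operator $K$ built from the $\theta_{1}$-kernel on a shifted contour, not $\mathfrak{K}$ with the $\theta_{4}$-kernel, so it cannot simply be ``fallen back on''; and the analytic-continuation-in-$\A$ route would require proving that each $q$-Taylor coefficient of the renormalized integral is analytic in $\A$ and that degree bounds pass through the limit --- plausible, but that is exactly the missing content. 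The same caveat afflicts part (i): the spectral-mismatch step in \eqref{eqproof_lemma_14} already fails for rational non-integer $\A$.

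The paper's actual proof sidesteps the spectrum entirely. It rewrites $(\theta_{4}(\half(x+y))\theta_{4}(\half(x-y)))^{-(2g+\A)}$ as the exponential of an explicit double series (leading to \eqref{eqproof_lemma_16}) in which every power of $\cos$ is accompanied by at least the same power of $q$, of the same parity. Expanding $\psi_{n,g}$ in Gegenbauer polynomials and invoking orthogonality \eqref{eq_Gegenbauer_orthog} then shows simultaneously that the integral vanishes to order at least $q^{n}$ (justifying the $q^{-n}$ prefactor) and that the coefficient of $q^{n+2\ell}$ is a polynomial of degree at most $n+2\ell$ in $\cos(x)$; the parity of $C^{(g)}_{n}$ excludes the intermediate degrees, and the leading term is identified with $C^{(g+\A)}_{n}$ by checking the Gegenbauer differential equation and the top coefficient, which fixes $\mathcal{M}_{n,g,\A}$. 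This purely combinatorial degree counting is what makes the lemma uniform in $\A$, including the resonant and integer cases where your argument degenerates. To keep your route you must either restrict the lemma to non-resonant parameters or supply the analyticity-in-$\A$ argument in detail.
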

\noindent(The proof is given at the end of this Section.)

Before we proceed with the proof, we wish to recall some well-known facts about the Gegenbauer polynomials: Let $p_{k}$ denote an arbitrary polynomial of degree $k\in\N_{0}$, then
\begin{equation}
\int_{-\pi}^{\pi} \frac{\d y}{2\pi} (\sin(y)^{2})^{g} C_{n}^{(g)}(\cos(y)) p_{k}(\cos(y)) = 0 \quad \forall \ k < n .
\label{eq_Gegenbauer_orthog}
\end{equation}

If $p_{n}(z)$ satisfies the Gegenbauer differential equation (see \eqref{eq_Gegenbauer_diff_equation}) and has the expansion
\begin{equation}
p_{n}(z) = \frac{2^{n} ( g )_{n}}{n!} z^{n}  +\mathcal{O}(z^{n-1}),
\end{equation}
then $p_{n}(z)=C_{n}^{(g)}(z)$.

\begin{proof}[Proof of Lemma \ref{lemma_polynomial_expansion}.]
Let us start by re-expressing the kernel function as
\begin{multline}
\frac{1}{(\theta_{4}(\half(x+y)) \theta_{4}(\half (x-y)))^{2g+\A}} \\ = \exp\left\{(2g+\A) \sum_{m\in\N} \frac{(-1)^{m}}{m}\sum_{\mu=0}^{m} \binom{m}{\mu} \frac{(-2)^{\mu}q^{2m-\mu}}{1-q^{2(2m-\mu)}}\Bigl( \cos( x +y )^{\mu} + \cos(x-y)^{\mu}\Bigr)\right\}.
\end{multline}
This relation follows from the simple expansion
\begin{align}
\frac{1}{\theta_{4}(x)^{\lambda}}= \prod_{k\in\N}\e^{-\lambda \ln(1-2q^{2k-1}\cos(2x)+q^{4k-2})}
& = \prod_{k\in\N}\e^{\lambda \sum_{m\in\N} \frac{(-1)^{m}}{m} \sum_{\mu=0}^{m}\binom{m}{\mu} q^{(2k-1)(2m-\mu)} (-2\cos(2x))^{\mu}} \\& =  \e^{\lambda \sum_{m\in\N} \frac{(-1)^{m}}{m} \sum_{\mu=0}^{m}\binom{m}{\mu} \frac{(-2)^{\mu}q^{2m-\mu}}{1-q^{2(2m-\mu)}}  (\cos(2x))^{\mu} }.
\end{align}
This allows us to write the integral transform in \eqref{eq_integral_transform_4} as
\begin{multline}
q^{-n}\mathcal{M}_{n,g,\A} (\theta_{1}(x)^{2})^{\half(g+\A)} \int\limits_{-\pi}^{\phantom{-}\pi} \frac{\d y}{2\pi} (\sin(y)^{2})^{g} \Bigl( C_{n}^{(g)}(\cos(y)) + \sum_{\ell\in\N} \sum_{m=-\ell}^{\ell} d_{n,n+2m}^{(\ell)} C_{n+2m}^{(g)}(\cos(y)) q^{2\ell} \Bigr) \\ 
\times \prod_{k\in\N}( 1 - 2q^{2k}( 2 \cos(y)^{2}- 1) + q^{4k})^{2g} \\ \times  \exp\left((2g+\A) \sum_{m\in\N} \frac{(-1)^{m}}{m}\sum_{\mu=0}^{m} \binom{m}{\mu} \frac{(-2)^{\mu}q^{2m-\mu}}{1-q^{2(2m-\mu)}}\Bigl( \cos( x +y )^{\mu} + \cos(x-y)^{\mu}\Bigr)\right).
\label{eqproof_lemma_16}
\end{multline}
The Taylor series of the second row of \eqref{eqproof_lemma_16} shows that every power of cosine is multiplied with, at least, the same power of $q$. Conversely, for a fixed power $m$ (say) of $q$, we do not get a higher power of cosine than $m$. It follows therefore that the coefficients in front of $q^{n+2\ell}$ is a polynomial of order $n+2\ell$ in $\cos(y)$, $\cos(x)$, and $\cos(x)\cos(x)$.
From \eqref{eq_Gegenbauer_orthog} it follows that the first non-zero term after integration has at least a power of $q^{n}$.  We therefore have a polynomial of order $n+2\ell$, in $\cos(x)$, in front of every $q^{2\ell}$ term, after integration.
To see that we do not get any powers $n+2\ell+1$ we recall that the Gegenbauer polynomials are even (resp. odd) polynomials for $n$ even (resp. odd). Then the integral vanished identically for odd (resp. even) polynomials of $\cos(y)$, if the Gegenbauer polynomial is even (resp. odd), \ie
\begin{equation}
\int\limits	_{-\pi}^{\pi} \frac{\d y}{2\pi} (\sin(y)^{2})^{g} C_{2n}^{(g)}(\cos(y)) \cos(y)^{2k+1} = \int\limits_{-\pi}^{\pi} \frac{\d y}{2\pi} (\sin(y)^{2})^{g} C_{2n+1}^{(g)}(\cos(y)) \cos(y)^{2k} =0 \quad \forall n,k\in\N_{0}.
\end{equation}

We now turn to calculating the normalization constant. Let us also consider the term in the integral which has the highest power of $\cos(x)$, with the lowest power of $q$, that will not vanish after integration. The term is given by
\begin{align}
\mathcal{P}_{n,g+\A}^{(0)}(\cos(x)) = \mathcal{M}_{n,g,\A} \int\limits_{-\pi}^{\phantom{-}\pi} \frac{\d y}{2\pi} (\sin(y)^{2})^{g} C_{n}^{(g)}(\cos(y))  \binom{-(2g+\A)}{n} (-2)^{n} \cos(x)^{n} \cos(y)^{n} + \ldots 
\end{align}
where the ''$\ldots$'' stand for terms that will not contribute. Using $\cos(y)^{n} = (n! / 2^{n}(g)_{n}) C_{n}^{(g)}(\cos(y)) + \text{ l.d. }$ yields that
\begin{equation}
\mathcal{P}_{n,g+\A}^{(0)}(\cos(x))  = \mathcal{M}_{n,g,\A} \frac{n!}{(g)_{n}} (-2)^{n}\binom{-(2g+\A)}{n} \cos(x)^{n} \int_{-\pi}^{\pi} \frac{\d y}{2\pi} (\sin(y)^{2})^{g} C_{n}^{(g)}(\cos(y))^{2} + \ldots
\end{equation} 
It follows by definition that $\mathcal{P}_{n,g}^{(0)}(z)$ satisfies the Gegenbauer differential equation (for parameter $g+\A$) and inserting $\mathcal{M}_{n,g,\A}$ in \eqref{eq_norm_constant_4} yields that  
\begin{equation}
\mathcal{P}_{n,g+\A}^{(0)}(z) = \frac{2^{n}(g+\A)_{n}}{n!} z^{n} + \text{ l.d. },
\end{equation}
and we obtain that $\mathcal{P}_{n,g+\A}^{(0)}(z) =C_{n}^{(g+\A)}(z)$.
\end{proof}

\section{Final remarks}\label{section_conclusions}

\subsection{Irregular solutions}
We showed that the solutions of the non-stationary Lam\ee equation \eqref{eq_non_stationary_lame_defining} can be constructed with the use of explicit integral transform operators (see \eqref{eq_integral_transform_1} and \eqref{eq_integral_transform_4}), by the recursive scheme in \eqref{eq_integral_transform_general} for a suitable seed function $\psi_{n,\gn}$. The seed function, for $\gn \neq 0 , 1$, can be constructed using the series solutions in \cite{AtaiLangmann2016_2} or by, what we refer to as, the irregular solutions: The non-stationary Lam\ee equation in \eqref{eq_non_stationary_lame_defining} is clearly\footnote{It follows from the simple relation $g(g-1) = (1 - g ) ( 1-g - 1)$.} invariant under $g \to 1-g$. The solutions of \eqref{eq_non_stationary_lame_defining} for parameters $(\A, 1-g)$, with $\A$ arbitrary, are then of the form
\begin{equation}
(\theta_{1}(x)^{2})^{\half(1-g)} \mathcal{P}_{n,1-g}(\cos(x)),
\label{eq_irregular_solutions}
\end{equation}
where $\mathcal{P}$ as in \eqref{eq_ns_lame_polynomial_expansion} for suitable choices of integer $n$.\footnote{This can be inferred from known relations of  the Gegenbauer polynomials with negative index, \eg $C_{n}^{(1-g)}(z)$ is known to vanish identically if $g \in \N$ and $n \geq g$.} We refer to the eigenfunctions in Eq. \eqref{eq_irregular_solutions} as \emph{irregular} as they are not normalizable for $g>3/2$. The kernel functions can be used to transform the irregular eigenfunctions into the same form as \eqref{eq_ns_lame_eigenfunctions}: consider
\begin{equation}
\int_{\mathcal{C}'} \frac{\d y}{2\pi} \frac{(\theta_{1}(x)^{2})^{\half(g+\A)}(\theta_{1}(y)^{2})^{\half g}}{(\theta_{1}(\half(x+y))\theta_{1}(\half(x-y)))^{2g+\A}} (\theta_{1}(y)^{2})^{\half(1-g)} \mathcal{P}_{n,1-g}(\cos(y)),
\label{eq_irregular_solution_integral_transform_0}
\end{equation}
for some integration path $\mathcal{C}'$ (see also below). We note that \eqref{eq_irregular_solution_integral_transform_0} is equivalent to 
\begin{multline}
(\theta_{1}(x)^{2})^{g+\A} \oint\limits_{\gamma} \frac{\d \xi}{2\pi} \frac{1}{(\xi - \cos(x))^{2g+\A}} \\ \times \prod_{l\in\N} \frac{(1-q^{2l} ( 2\xi^{2} - 1 ) + q^{4l})}{(1 - 4 \xi \cos(x) (q^{2l}+ q^{6l}) + 2(2 \cos(x)^{2} + 2 \xi^{2}-1)q^{4l} + q^{8l})^{2 g+\A}
} \mathcal{P}_{n,1-g}(\xi)
\label{eq_irregular_solution_integral_transform}
\end{multline}
by a simple change of variables. If $(\A,g)\in \N$ then we can take take the integration contour $\gamma$ in \eqref{eq_irregular_solution_integral_transform}, as a simple closed contour around $\cos(x)$ (taken counter-clockwise) and obtain the relation
\begin{equation}
\mathcal{P}_{n,g+\A}(z) = \left. -\imag \frac{\partial^{2g+\A-1}}{\partial \xi^{2g+\A-1}} \prod_{l\in\N} \frac{(1-q^{2l} ( 2\xi^{2} - 1 ) + q^{4l})  \mathcal{P}_{n,1-g}(\xi)}{(1 - 4 \xi z (q^{2l}+ q^{6l}) + 2(2 z^{2} + 2 \xi^{2}-1)q^{4l} + q^{8l})^{2 g+\A} } \right|_{\xi=z}
\end{equation}
(up to multiplication by some normalization constant) with $z=\cos(x)$, by Cauchy's integral formula. The functions constructed in \eg \cite{etingof1994,FelderVarchenko,BazhanovMangazeev:1} can then be transformed into seed functions for our recursive scheme.

It is also interesting to consider the transformation of the kernel function under $g\to1-g$: we express the generalized kernel function identity as
\begin{equation}
\Bigl( \frac{2\imag}{\pi} ( \tilde{g}- g) \frac{\partial}{\partial \tau} + H(x;\tilde{g}) - H(y; g) - \tilde{C}_{\mu} \Bigr) \frac{(\theta_{1}(x)^{2})^{\half \tilde{g}}(\theta_{1}(y)^{2})^{\half {g}}}{\Bigl( \theta_{\mu}(\half ( x + y )) \theta_{\mu}( \half( x-y))\Bigr)^{\tilde{g} + g}} = 0 \quad( g , \tilde{g} \in \C).
\label{eq_kernel_function_identity_mu_different_coupling}
\end{equation}
It is clear that the generalized kernel function identity in \eqref{eq_kernel_function_identity_mu_different_coupling} will not be invariant under $g \to 1 - g $, nor $\tilde{g} \to 1- \tilde{g}$, as this changes the $\A = ( \tilde{g} - g )$ parameter by
\begin{subequations}\label{eq_negative_A_parameter}
\begin{align}
\A \to 2g + \A -1  , \quad k_{\mu}(x,y) \to \frac{(\theta_{1}(x)^{2})^{\half(g+\A)} (\theta_{1}(y)^{2})^{\half (1-g)}}{( \theta_{\mu}(\half(x+y))\theta_{\mu}(\half(x-y)))^{\A + 1}} ,\quad &\text{ if } g \to 1- g, \\
\A \to -2 g - \A + 1 , \quad k_{\mu}(x,y)\to \frac{(\theta_{1}(x)^{2})^{\half(1-g-\A)} (\theta_{1}(y)^{2}){\half g}}{( \theta_{\mu}(\half(x+y))\theta_{\mu}(\half(x-y)))^{ 1 - \A }} \quad &\text{ if } \tilde{g}\to 1- \tilde{g}, \\
\A \to - \A, \quad k_{\mu}(x,y) \to \frac{(\theta_{\mu}(\half(x+y))\theta_{\mu}(\half(x-y)))^{2 g + \A - 2}}{(\theta_{1}(x)^{2})^{\half(g + \A -1)} (\theta_{1}(y)^{2})^{\half(g-1)}}\quad &\text{ if } g \to 1- g \text{ and } \tilde{g}\to 1 - \tilde{g}.
\end{align}\end{subequations} 
which allows us to construct other integral transforms for many different types of solutions. In particular, the integral transform, with \eqref{eq_negative_A_parameter} as integral kernel, seems especially suitable for constructing integral representations of particular solutions considered in \cite{Fateev2009}.

\subsection{Lam\ee functions}
The Lam\ee equation corresponds to the special case of \eqref{eq_non_stationary_lame_defining}with the parameter $\A=0$, \ie
\begin{equation}
\Bigl( - \frac{\partial^{2}}{\partial x^{2}} + g(g-1)\wp(x) \Bigr) \psi(x) = E \psi(x)
\label{eq_lame_equation}
\end{equation}
Setting the parameter $\A=0$ in the integral transforms yields the integral equations
\begin{equation}
( \mathcal{K}_{g,g} \psi )(x) = \lambda \psi(x)
\end{equation}
with $\mathcal{K}=K$ in \eqref{eq_integral_transform_1} or $\mathcal{K}=\mathfrak{K}$ in \eqref{eq_integral_transform_4}, and $\lambda$ some non-zero, finite constant. It is clear that the Lam\ee functions are eigenfunctions of the integral operators. In the case of the integral transform in \eqref{eq_integral_transform_4} yields an Hilbert-Schmidt (HS) operator. In addition; if $\tau = \imag t$, with $ t>0$, then \eqref{eq_integral_transform_4} is a self-adjoint HS operator with a complete set of eigenfunctions, as was originally shown in \cite{ruijsenaars2009}.
We also believe that it is possible to construct the Lam\ee functions from the recursive integral representations of \eqref{eq_non_stationary_lame_defining} by consider the limit $\A \to 0$,$ \numb \to \infty$, such that $g = \numb \A$ is kept finite.

We note that the the Lam\ee differential operator $H$ in \eqref{eq_Lame_operator} is known to be a (formally) self-adjoint operator if $\Re(\tau)=0$. It can be shown that the non-stationary Lam\ee equation reduces to a time dependent Schr{\"o}dinger equation in the case $\tau = \imag t , \A = \imag k $ with $t >0,k\neq0$, \ie
\begin{equation}
-\imag k\frac{\partial \psi}{\partial t} = (H - E) \psi.
\end{equation}
The solutions of the non-stationary Lam\ee equation are thus the wavefunction of a quantum system determined by the Lam\ee equation.
The results in Section \ref{section_integer couplings} do not work for complex parameter $\A$, but it could be possible to analytically continue the integral transform in \ref{section_arbitrary_couplings} for complex values of $\A$. We leave this question for future work but we wish to point out that the series solutions in \cite{AtaiLangmann2016_2} is particularly suited for complex model parameters.

\subsection{Conclusions} 
We presented recursive schemes, based on kernel functions, allowing to construct  many different solutions of the non-stationary Lam\ee equation in \eqref{eq_non_stationary_lame_defining}. We have also presented explicit integral representations of the solutions for special values of the model parameters. The non-stationary Lam\ee equation is the simplest non-trivial case of many other interesting models which can be solved by a similar recursive scheme.

As mentioned in the introduction, our goal is to give a complementary method to series solutions of the non-stationary Heun equation in \cite{AtaiLangmann2016_2}. The non-stationary Heun equation, which can be written as
\begin{equation}
\Bigl( \frac{\imag }{\pi} \A \frac{\partial}{\partial \tau} - \frac{\partial^{2}}{\partial x^{2}} + \sum_{\nu=0}^{3} g_{\nu}( g_{\nu}-1) \wp(x + \omega_{\nu} | \pi , \pi\tau ) \Bigr) \psi = E(\tau) \psi
\label{eq_non_stationary_heun_defining}
\end{equation}
with $\omega_{0}\equiv 0 , \omega_{1} = \pi , \omega_{2} \equiv -\pi - \pi \tau , \omega_{3}\equiv \pi\tau$, and model parameters $(\A, \{g_{\nu}\}_{\nu=0}^{3})$, also has explicitly known kernel functions \cite{Langmann_Takemura}. We believe that the recursive scheme in this paper can be adapted to the non-stationary Heun equation and that the Lemmas in Sections \ref{section_proof_of_theorem_1} and \ref{section_proof_of_theorem_2} can be generalized for the non-stationary Heun equation.

There are also explicitly known kernel functions for the many-variable generalizations of the non-stationary Lam\ee equation, \ie the non-stationary elliptic Calogero-Sutherland model \cite{OlshanetskyPerelomov,Langmann_eCS_source} and the non-stationary Inozemtsev model \cite{Inozemtsev,Langmann_Takemura}. We believe that it is also possible to construct integral representations of solutions of these many-variable generalizations of the non-stationary Lam\ee equation.

One important question is in regard to the uniqueness of the solutions of the non-stationary Lam\ee equation: Our results show that we can translate this question to the question regarding the uniqueness of the integral transform. We leave this important question for future work.

We finally note that the integral relations are useful for determining interesting properties of the solutions \cite{LAMBE,Erdelyi,Kazakov1996,Novikov2006}. We mention in particular, a paper by Ruijsenaars that found a hidden permutation symmetry of the solutions to the Heun equation \cite{ruijsenaars2009}. We hope that our results can be used to determine properties of solutions for the non-stationary Lam\ee equation.

\begin{acknowledgments}
I would like to thank E. Langmann, H. Rosengren, and J. Lidmar for helpful comments on the manuscript and discussions. I am also grateful to O. Palm and P. Moosavi for discussions.
\end{acknowledgments}

\appendix

\section{Special functions}\label{appendix_special_functions}

 For the convenience of the reader, we collect here definitions and properties of the special functions that are used in this paper. We use the standard results and notations that can be found in \eg \cite{WhitWat} or \cite{NIST:DLMF}.

The (raising) Pochhammer symbol $(x)_{n}$ is defined as 
\begin{equation}
(x)_{n} \equiv x ( x + 1 ) \cdots (x + n -1 )
\label{eq_Pochhammer_def}
\end{equation}
for any positive integer $n$. It is also possible to express the Pochhammer symbol as $(x)_{n} = \Gamma( x+ n )  / \Gamma(x)$, which is valid for any $x,n\in\C$ such that they do not hit the poles of the Euler $\Gamma$-function.
The binomial coefficients can be expressed in terms of the Pochhammer symbols, given by
\begin{equation}
\binom{x}{n} = \frac{(x - n + 1)_{n}}{n!} = \frac{\Gamma( x + 1 ) }{\Gamma(x - n + 1) \Gamma(n+1)}.
\end{equation}
(See \S 5 of \cite{NIST:DLMF}.)

The Gegenbauer polynomials are the unique normalizable solution of the differential equation
\begin{equation}
\Bigl( (1-z^{2}) \frac{\partial^{2}}{\partial z^{2}} - (2g+1) z \frac{\partial }{\partial z} + n( n+2g) \Bigr) C_{n}^{(g)}(z) = 0 \quad (n\in\N_{0}),
\label{eq_Gegenbauer_diff_equation}
\end{equation}
and has the explicit series representation 
\begin{equation}
C_{n}^{(g)}(z)  = \sum_{k=0}^{ \lfloor n / 2 \rfloor} \frac{(-1)^{k} ( g )_{n - k}}{k! (n-2k)!} (2z)^{n-2k}.
\label{eq_Gegenbauer_series_rep}
\end{equation}
The Gegenbauer polynomials $C_{n}^{(g)}$ satisfy the three-term recurrence relation
\begin{equation}
z C_{n}^{(g)}(z) = a_{n} C_{n+1}^{(g)}(z) - c_{n}C_{n-1}^{(g)}(z) , \quad a_{n}\equiv \frac{n+1}{2(n+g)} ,\quad c_{n} \equiv \frac{n+2g-1}{2(n+g)}.
\label{eq_three_term_rec}
\end{equation}
The Gegenbauer polynomials have the generating function 
\begin{equation}
\frac{1}{(1 - 2 z \xi + \xi^{2} )^{g}} = \sum_{k\in\N_{0}} C_{k}^{(g)}(z) \xi^{k},
\label{eq_Gegenbauer_generating_function}
\end{equation}
which yields the integral representation of the Gegenbauer polynomials
\begin{equation}
C_{n}^{(g)}(z) = \oint_{| \xi | < 1} \frac{\d \xi}{2\pi \imag \xi } \frac{1}{(1-2 z \xi + \xi^{2})^{g}} \xi^{-n}.
\end{equation}

The Gegenbauer polynomials are known to satisfy the orthogonality condition
\begin{equation}
\int_{-\pi}^{\pi} \frac{\d y}{2\pi} (\sin(x)^{2})^{g} C_{n}^{(g)}(\cos(y)) C_{m}^{(g)}(\cos(y)) = \delta_{n,m} \mathfrak{h}_{n}
\label{eq_Gegenbauer_norm_1}
\end{equation}
with 
\begin{equation}
\mathfrak{h}_{n} \equiv \frac{2^{1-2g} \Gamma(n + 2 g)}{n! (n+g)  [\Gamma(g)]^{2}},
\label{eq_Gegenbauer_norm_2}
\end{equation}
and where the integral on the l.h.s. of \eqref{eq_Gegenbauer_norm_1} is proportional to the inner product $\langle \cdot , \cdot \rangle$ by a simple change of variables.

(See \S18 \cite{NIST:DLMF}.)

\subsection{Elliptic functions}\label{appendix_elliptic_functions}

The Weierstrass $\wp$-function with periods $(2\wone, 2 \wthr)$ is defined as
\begin{equation}
\wp(x | \wone , \wthr) \equiv \frac{1}{x^{2}} + \sum\limits_{ (n,m)\in \Z^{2}\setminus\{(0,0)\}} \frac{1}{(x - \Omega_{n,m})^{2}} - \frac{1}{\Omega_{n,m}^{2}}
\label{eq_Weierstrass_P}
\end{equation}
with the lattice $\Omega_{n,m}\equiv 2 n \wone + 2 m \wthr$. For simplicity we set the periods to
\begin{equation}
2\wone = \pi , \quad 2  \wthr = \pi \tau
\end{equation}
in the following and always use $\wp(x)$ as a short-hand for $\wp(x | \phalf , \phalf \tau)$. We also recall the \emph{nom\ee} $q \equiv \e^{\imag \pi \tau}$.
The Weierstrass $\wp$-function can also be expanded as
\begin{equation}
\wp(x) = \frac{1}{\sin(x)^{2}} - \frac{\eta_{1}}{\pi} - 8 \sum_{n\in\N} \frac{n q^{2n}}{1-q^{2n}} \cos( 2 n x ) \quad ( x \in [-\pi , \pi ])
\label{eq_Weierstrass_P_expansion}
\end{equation}
where the sum is absolutely convergent.

The Jacobi theta functions are defined as 
\begin{equation}
\begin{split}
\vartheta_{1}(x) \equiv 2 \sum_{n\in\N_{0}} (-1)^{n} q^{(n+\half)^{2}} \sin((2n+1)x) &, \quad \vartheta_{2}(x) \equiv 2 \sum_{n\in\N_{0}} q^{(n+\half)^{2}} \cos((2n+1)x) \\
\vartheta_{3}(x) \equiv 1 + 2 \sum_{n\in\N}q^{n^{2}} \cos(2 n x) &, \quad \vartheta_{4}(x) \equiv 1 + 2 \sum_{n\in\N} (-1)^{n} q^{n^{2}} \cos(2nx)
\end{split} 
\end{equation}
It follows from the definitions that $\vartheta_{\nu}$ has simple zeroes at the points $z_{\nu}$ ($\nu=1,2,3,4$), given by \cite{WhitWat}
\begin{equation}
z_{1} = 0 \ (\text{mod }(\pi , \pi \tau)), \quad z_{2} = \phalf \ (\text{mod }(\pi , \pi \tau)), \quad z_{3} = \phalf ( 1 + \tau)  \ (\text{mod }(\pi , \pi \tau)), \quad z_{4} = \phalf \tau \ (\text{mod }(\pi , \pi \tau)).
\end{equation}
The relations to the $\theta_{\nu}$-functions (see Section \ref{section_notation}) follows from the Jacobi triple product identity.

The Jacobi $\vartheta$-functions all satisfy the differential equation
\begin{equation}
\Bigl( \frac{4 \imag }{\pi} \frac{\partial}{\partial \tau} - \frac{\partial^{2}}{\partial x^{2}} \Bigr) \vartheta_{\nu}(x)=0 \quad (\nu = 1,2,3,4).
\label{eq_Jacobi_heat_equation}
\end{equation}

It follows that the $\theta_{\nu}$-functions in \eqref{eq_Jacobi_theta_functions} satisfy
\begin{equation}
\Bigl( \frac{4 \imag }{\pi} \frac{\partial}{\partial \tau} - \frac{\partial^{2}}{\partial x^{2}} \Bigr) \theta_{\nu}(x) = \begin{cases} \theta_{\nu}(x), \quad \text{ if } \nu = 1,2 \\ 
0\quad \text{   if } \nu = 3,4
\end{cases}
\label{eq_theta_heat_equation}
\end{equation}

\subsubsection{Relations}\label{appendix_relations}
If follows from a simple check that 
\begin{equation}
\vartheta_{1}'(0) = 2 q^{\quarter} G^{3},
\end{equation}
with $G$ in \eqref{eq_G_constant}.
We also have the relations
\begin{equation}
\frac{\eta_{1}}{\pi} = \frac{1}{12} \frac{\vartheta_{1}'''(0)}{\vartheta_{1}'(0)} = \frac{1  }{3} \frac{\imag}{\pi} \frac{1}{\vartheta_{1}'(0)}\frac{\partial \vartheta_{1}'(0)}{\partial \tau}
\end{equation}

The Weierstrass $\wp$-function can be expressed as
\begin{equation}
\wp(x ) =  \left( \frac{\theta_{1}'( x)}{\theta_{1}(x)} \right)^{2} - \frac{\theta_{1}''( x)}{\theta_{1}( x)} - \frac{4\eta_{1}}{\pi}
\label{eq_Weierstrass_P_theta_def}
\end{equation}
with $$\theta_{1}'( x) \equiv \frac{\partial}{\partial x} \theta_{1}( x) , \quad \theta_{1}''(x ) \equiv \frac{\partial^{2}}{\partial x^{2}} \theta_{1}( x). $$

The theta functions also satisfies 
\begin{equation}
\Bigl( \frac{\theta_{1}'(x_{1})}{\theta_{1}(x_{1})} +  \frac{\theta_{1}'(x_{2})}{\theta_{1}(x_{2})} +  \frac{\theta_{1}'(x_{3})}{\theta_{1}(x_{3})} \Bigl)^{2} = \wp(x_{1}) + \wp(x_{2}) + \wp(x_{3}) \quad ( x_{1} + x_{2} + x_{3} =0 ),
\label{eq_Weierstrass_P_theta_relation_1}
\end{equation}
which yields
\begin{equation}
\begin{split}
 \frac{\theta_{1}'(x_{1})\theta_{1}'(x_{2})}{\theta_{1}(x_{1})\theta_{1}(x_{2})} + \frac{\theta_{1}'(x_{1})\theta_{1}'(x_{3})}{\theta_{1}(x_{1})\theta_{1}(x_{3})} + \frac{\theta_{1}'(x_{2})\theta_{1}'(x_{3})}{\theta_{1}(x_{2})\theta_{1}(x_{3})} &= \frac{1}{2} ( \wp(x_{1}) +\wp(x_{2}) + \wp(x_{3})) - \frac{1}{2}\Bigl(\frac{\theta_{1}'(x_{1})^{2}}{\theta_{1}(x_{1})^{2}} +  \frac{\theta_{1}'(x_{2})^{2}}{\theta_{1}(x_{2})^{2}} +  \frac{\theta_{1}'(x_{3})^{2}}{\theta_{1}(x_{3})^{2}}\Bigr) \\ 
&= - \frac{1}{2} \Bigl( \frac{\theta''(x_{1})}{\theta_{1}(x_{1})} + \frac{\theta''(x_{2})}{\theta_{1}(x_{2})} + \frac{\theta''(x_{2})}{\theta_{1}(x_{2})} + 12 \frac{\eta_{1}}{\pi} \Bigr)
\end{split}
\label{eq_Weierstrass_P_theta_relation_2}
\end{equation}

The $\vartheta_{1}$-function is known to satisfy the (quasi-) periodicity relations
\begin{equation}
\vartheta_{1}(x + \pi n + \pi m \tau ) = (-1)^{n+m} q^{-m^{2}} \e^{-2 \imag m x} \vartheta_{1}(x), \quad n,m\in\Z
\end{equation}
which also extends to the $\theta_{1}$-function.
The $\theta_{\nu}$-functions satisfy the half-period shift relations\begin{subequations}\label{eq_theta_half_shifts}
\begin{align}
\theta_{1}( x \pm \phalf ) = \pm \theta_{2}(x) , &\quad \theta_{1}(x \pm \phalf \tau) = \pm \imag \e^{\mp \imag x} q^{-\half} \theta_{4}(x), \\
\theta_{2}( x \pm \phalf ) = \mp \theta_{1}(x) , &\quad \theta_{2}(x \pm \phalf \tau) = \e^{\mp \imag x} q^{-\half} \theta_{3}(x), \\
\theta_{3}( x \pm \phalf ) = \pm \theta_{4}(x) , &\quad \theta_{3}(x \pm \phalf \tau) = \e^{\mp \imag x} \theta_{2}(x), \\
\theta_{4}( x \pm \phalf ) = \pm \theta_{3}(x) , &\quad \theta_{4}(x \pm \phalf \tau) = \pm\imag\e^{\mp \imag x}  \theta_{1}(x), \\
\end{align}
\end{subequations}
which follows from the well-known half-period shift relations of the Jacobi $\vartheta_{\nu}$-functions (see \eg Chapter XXI of \cite{WhitWat}).

\begin{align}
\vartheta_{1}(y) &= \frac{-\imag q^{\quarter}}{2} \e^{-\imag y} ( 1 - \e^{2 \imag y}) \prod\limits_{l \in\N} ( 1 - q^{2 l} \e^{2 \imag y})(1-q^{2 l} \e^{-2\imag y}) \\
&= \frac{-\imag q^{\quarter}}{2} \e^{-\imag y} \Theta_{1}(\e^{2 \imag y})
\end{align}
\begin{align}
\vartheta_{1}(\half(x+y)) \vartheta_{1}(\half(x-y)) = 2 q^{\half} \e^{\imag y } \Theta(\cos(x) , \e^{\imag y})
\end{align}
which follows by a straightforward check from the definitions in Section \ref{section_notation}.

\subsubsection{Trigonometric limit}\label{appendix_trig_limit}

The Weierstrass $\wp$-function has a well-known \cite{WhitWat} trigonometric limit, given by
\begin{equation}
\wp(x) \to \frac{1}{\sin(x)^{2}} - \frac{1}{3} \quad \text{as} \quad \Im(\tau) \to +\infty.
\label{eq_trig_limit_Weierstrass}
\end{equation}
(Recall that $\wp(x) \equiv \wp(x | \half \wone , \half \wone \tau)$.)

It is clear from the definitions in \eqref{eq_Jacobi_theta_functions} that
\begin{equation}
\begin{split}
\theta_{1}(x) \to \sin(x) ,& \quad  \theta_{2} \to \cos(x) \\
\theta_{3}(x) \to 1 , & \quad \theta_{4}(x) \to 1
\end{split}
\label{eq_trig_limit_theta}
\end{equation}
uniformly as $\Im(\tau)\to+\infty$.

\section{Proof of Lemmas \ref{lemma_gen_KF_id_1} and \ref{lemma_gen_KF_id_4}}\label{appendix_proof_of_lemmas}
In this Section it is proven that the functions
\begin{equation}
k_{\mu}(x,y) = \frac{(\theta_{1}(x)^{\half})^{\half (g+\A) } (\theta_{1}(y)^{2})^{\half g}}{(\theta_{\mu}(\half(x+y)) \theta_{\mu}(\half(x-y)))^{2g+\A}},
\label{eq_appendix_kernel_functions_mu}
\end{equation}
with $\mu\in \{1,2,3,4\}$ fixed, satisfies the generalized kernel function identity
\begin{equation}
\Bigl( \frac{2\imag}{\pi} \A \frac{\partial}{\partial \tau} + H_{\L}(x;g+\A) - H_{\L}(y;g) - C_{\mu} \Bigr) k_{\mu}(x,y) = 0
\end{equation}
with 
\begin{equation}
C_{\mu}  = \begin{cases}
4\A ( 1 - 2g - \A) \frac{\eta_{1}}{\pi}+ 6\A(2g+\A)\left( \frac{\eta_{1}}{\pi}-\frac{1}{12}\right) , \quad \text{ if } \mu = 1, 2 \\
\A (2g+\A) +4\A ( 1 - 2g - \A) \frac{\eta_{1}}{\pi}+ 6\A(2g+\A)\left( \frac{\eta_{1}}{\pi}-\frac{1}{12}\right)  ,\quad \text{ if } \mu = 3 ,4
\end{cases}
\end{equation}

We start with the $\mu=1$ case which  is, as stated after Lemma \ref{lemma_gen_KF_id_1}, a special case of  Corollary 3.2 of \cite{Langmann_Takemura}. 

It follows from straightforward calculations that
\begin{equation}
\Bigl( -\frac{\partial^{2}}{\partial x^{2}} + \frac{\partial^{2}}{\partial y^{2}} \Bigr) k_{1}(x,y) =  \mathcal{W}(x,y) k_{1}(x,y),
\end{equation}
with 
\begin{equation}
\mathcal{W}(x,y) = \mathcal{W}_{1}(x,y) + \mathcal{W}_{2}(x,y)
\end{equation}
where
\begin{align}
\mathcal{W}_{1} &= (g+\A)  \Bigl( \frac{\theta_{1}'(x)^{2}}{\theta_{1}(x)^{2}} - \frac{\theta_{1}''(x)}{\theta_{1}(x)}\Bigr)  - (g+\A)^{2}\frac{\theta_{1}'(x)^{2}}{\theta_{1}(x)^{2}} - g\Bigl( \frac{\theta_{1}'(y)^{2}}{\theta_{1}(y)^{2}} - \frac{\theta_{1}''(y)}{\theta_{1}(y)} \Bigr) + g^{2} \frac{\theta_{1}'(y)^{2}}{\theta_{1}(y)^{2}} \\
\mathcal{W}_{2} &= (g+\A)(2g+\A)\frac{\theta_{1}'(x)}{\theta_{1}(x)} \Bigl( \frac{ \theta_{1}'(\half(x+y))}{\theta_{1}(\half(x+y))} + \frac{ \theta_{1}'(\half(x-y))}{\theta_{1}(\half(x-y))} \Bigr) \\ &-  g (2g+\A) \frac{\theta_{1}'(y)}{\theta_{1}(y)} \Bigl( \frac{ \theta_{1}'(\half(x+y))}{\theta_{1}(\half(x+y))} - \frac{\theta_{1}'(\half(x-y))}{\theta_{1}(\half(x-y))} \Bigr)  -(2g+\A)^{2} \frac{\theta_{1}'(\half(x+y)) \theta_{1}'(\half(x-y))}{\theta_{1}(\half(x+y))\theta_{1}(\half(x-y))}.
\end{align}

Using \eqref{eq_Weierstrass_P_theta_def} yields
\begin{equation}
\mathcal{W}_{1} = - (g+\A)(g+ \A - 1) (\wp(x)+ 4\frac{\eta_{1}}{\pi}) + g(g-1) (\wp(y) + 4 \frac{\eta_{1}}{\pi} ) - (g+\A)^{2} \frac{\theta_{1}''(x)}{\theta_{1}(x)} + g^{2} \frac{\theta_{1}''(y)}{\theta_{1}(y)}.
\end{equation}
Equation \eqref{eq_Weierstrass_P_theta_relation_1} yields that
\begin{equation}
\begin{split}
\frac{\theta_{1}'(x)}{\theta_{1}(x)} \Bigl( \frac{ \theta_{1}'(\half(x+y))}{\theta_{1}(\half(x+y))} + \frac{ \theta_{1}'(\half(x-y))}{\theta_{1}(\half(x-y))} \Bigr) =  \frac{\theta_{1}'(\half(x+y)) \theta_{1}'(\half(x-y))}{\theta_{1}'(\half(x+y)) \theta_{1}'(\half(x-y))} \\ - \frac{1}{2} ( \wp(x) + \wp(\half(x+y)) + \wp(\half(x-y)) ) + \frac{1}{2} \Bigl( \frac{\theta_{1}'(x)^{2}}{\theta_{1}(x)^{2}} + \frac{\theta_{1}'(\half(x-y))^{2}}{\theta_{1}(\half(x-y))^{2}}+ \frac{\theta_{1}'(\half(x+y))^{2}}{\theta_{1}(\half(x+y))^{2}} \Bigr),
\end{split}
\end{equation}
(using $x_{1} = - x, x_{2} = \half(x-y), x_{3}= \half(x+y)$,) and
\begin{equation}
\begin{split}
 \frac{\theta_{1}'(y)}{\theta_{1}(y)} \Bigl( \frac{ \theta_{1}'(\half(x+y))}{\theta_{1}(\half(x+y))} - \frac{\theta_{1}'(\half(x-y))}{\theta_{1}(\half(x-y))} \Bigr) = - \frac{\theta_{1}'(\half(x+y)) \theta_{1}'(\half(x-y))}{\theta_{1}'(\half(x+y)) \theta_{1}'(\half(x-y))} \\ - \frac{1}{2} ( \wp(y) + \wp(\half(x+y)) + \wp(\half(x-y)) ) + \frac{1}{2} \Bigl( \frac{\theta_{1}'(y)^{2}}{\theta_{1}(y)^{2}} + \frac{\theta_{1}'(\half(x-y))^{2}}{\theta_{1}(\half(x-y))^{2}}+ \frac{\theta_{1}'(\half(x+y))^{2}}{\theta_{1}(\half(x+y))^{2}} \Bigr),
\end{split}
\end{equation}
(using $x_{1}=-y, x_{2} = \half(y+x), x_{3}=\half(y-x)$).
If follows from \eqref{eq_Weierstrass_P_theta_def} that
\begin{equation}
\begin{split}
\mathcal{W}_{2}(x,y) &= \\& (g+\A)(2g+\A)\Bigl( \frac{\theta_{1}'(\half(x+y)) \theta_{1}'(\half(x-y))}{\theta_{1}'(\half(x+y)) \theta_{1}'(\half(x-y))} + \frac{1}{2} \Bigl( \frac{\theta_{1}''(x)}{\theta_{1}(x)} + \frac{\theta_{1}''(\half(x+y))}{\theta_{1}(\half(x+y))}+\frac{\theta_{1}''(\half(x-y))}{\theta_{1}(\half(x-y))} + 12 \frac{\eta_{1}}{\pi} \Bigr) \Bigr) \\
&-g(2g+\A)\Bigl(  - \frac{\theta_{1}'(\half(x+y)) \theta_{1}'(\half(x-y))}{\theta_{1}'(\half(x+y)) \theta_{1}'(\half(x-y))} + \frac{1}{2} \Bigl( \frac{\theta_{1}''(y)}{\theta_{1}(y)} + \frac{\theta_{1}''(\half(x+y))}{\theta_{1}(\half(x+y))}+\frac{\theta_{1}''(\half(x-y))}{\theta_{1}(\half(x-y))} +12 \frac{\eta_{1}}{\pi} \Bigr) \Bigr) \\
&-(2g+\A)^{2} \frac{\theta_{1}'(\half(x+y)) \theta_{1}'(\half(x-y))}{\theta_{1}(\half(x+y))\theta_{1}(\half(x-y))} \\
&= \frac{\A(2g+\A)}{2} \Bigl(\frac{\theta_{1}''(\half ( x+ y))}{\theta_{1}(\half ( x+ y))} + \frac{\theta_{1}''(\half ( x - y))}{\theta_{1}(\half ( x - y))} \Bigr) + \frac{1}{2} ( g + \A)(2g+\A) \frac{\theta_{1}''(x)}{\theta_{1}(x)} \\ & - \frac12  g(2g + \A) \frac{\theta_{1}''(y)}{\theta_{1}(y)} + 6 \A(2g+\A) \frac{\eta_{1}}{\pi}.
\end{split}
\end{equation}

Combining these relations yields that
\begin{equation}\begin{split}
(H(x;g+\A) - H(y;g) ) k_{1}(x,y) =  2\A(2 -  10 g - 5 \A)\frac{\eta_{1}}{\pi} k_{1}(x,y) \\ + \Bigl( - \half \A( g+ \A) \frac{\theta_{1}''(x)}{\theta_{1}(x)} - \half \A g \frac{\theta_{1}''(y)}{\theta_{1}(y)}  +\half \A ( 2 g + \A) \Bigl(\frac{\theta_{1}''(\half ( x+ y))}{\theta_{1}(\half ( x+ y))} + \frac{\theta_{1}''(\half ( x - y))}{\theta_{1}(\half ( x - y))}  \Bigr) \Bigr) k_{1}(x,y),
\end{split}\end{equation}
with $H(x;g)$ the Lam\ee differential operator in \eqref{eq_Lame_operator}.
Finally we note that \eqref{eq_theta_heat_equation} yields that 
\begin{multline}
\frac{2\imag}{\pi} \A \frac{\partial}{\partial \tau} k_{1}(x,y) = \Bigl(  \frac12  \A( g+ \A) \frac{\theta_{1}''(x)}{\theta_{1}(x)} + \half \A g \frac{\theta_{1}''(y)}{\theta_{1}(y)}  \Bigr. \\ \Bigl. - \frac12 \A ( 2 g + \A) \Bigl(\frac{\theta_{1}''(\half ( x+ y))}{\theta_{1}(\half ( x+ y))} + \frac{\theta_{1}''(\half ( x - y))}{\theta_{1}(\half ( x - y))}  \Bigr) -\half \A  (2g+\A) \Bigr) k_{1}(x,y)
\end{multline}

Collecting the results yields that 
\begin{multline}
\Bigl( \frac{2\imag}{\pi}\A \frac{\partial}{\partial \tau} + H(x;g+\A) - H(y; g) \Bigr)k_{1}(x,y) =  ( 4 \A( 1- 2 g - \A) \frac{\eta_{1}}{\pi}  + 6\A ( 2g + \A) ( \frac{\eta_{1}}{\pi} - \frac{1}{12})) k_{1}(x,y),
\end{multline}
which concludes the proof for the $\mu=1$ case.

Note that the $\mu=2$ and $\mu=3$ cases follows from the $\mu=1$ and $\mu=4$ cases, respectively, by a simple shift $y \to y+\pi$ (note that the Lam\ee differential operator is invariant under this shift).  For the $\mu=4$ case we note that if $k_{1}(x,y)$ satisfies the kernel function identity in \eqref{eq_kernel_function_identity}, then
\begin{equation}
\Bigl( \frac{2\imag}{\pi} \A \frac{\partial}{\partial \tau} + H(x;g+\A) - H(y;g) - 2 \imag \A \frac{\partial}{\partial y} - C_{1} \Bigr) k_{1}(x , y + \pi \tau).
\end{equation}
Using \eqref{eq_theta_half_shifts} yields that 
\begin{equation}
\begin{split}
k_{1}(x,y+\pi\tau) &= \frac{(\theta_{1}(x)^{2})^{\half(g+\A)} (\theta_{1}(y + \pi \tau)^{2})^{\half g}}{( \theta_{1}(\half (x + y ) + \phalf \tau) \theta_{1}(\half(x-y) - \phalf \tau))^{2g+\A}} = \frac{(\theta_{1}(x)^{2})^{\half(g+\A)}\Bigl( (- \e^{-2\imag y} q^{-1} \theta_{1}(y)^{2} \Bigr)^{\half g}}{( \e^{-\imag y} q^{-1} \theta_{1}(\half (x + y )) \theta_{1}(\half(x-y)))^{2g+\A}} \\ 
&= \e^{\imag \A y} q^{ g +\A} k_{4}(x,y)
\end{split}
\end{equation}
with $k_{4}(x,y)$ in \eqref{eq_appendix_kernel_functions_mu}. Finally we note that 
\begin{equation}
(\e^{\imag \A y} q^{g+\A})^{-1} \Bigl( \frac{2\imag}{\pi} \A \frac{\partial}{\partial \tau} + \frac{\partial^{2}}{\partial y^{2}} - 2 \imag \A \frac{\partial}{\partial y} \Bigr) \e^{\imag \A y} q^{g+\A} = \Bigl( \frac{2\imag}{\pi} \A \frac{\partial}{\partial \tau} + \frac{\partial^{2}}{\partial y^{2}} - \A(2g+\A) \Bigr)
\end{equation}
which yields the generalized kernel function identity in \eqref{eq_kernel_function_4} for $\mu=4$.
(Note that the generalized kernel function identities, for $\mu=2,3,4$, can also be obtained by straightforward calculations, as in the $\mu=1$ case above.)

\end{document}